\DeclareAcronym{VAE}{
  short = VAE,
  long = variational auto-encoder}
\DeclareAcronym{TEM}{
  short = TEM,
  long = transmission electron microscope}
\DeclareAcronym{Cryo-SPA}{
  short = Cryo-SPA,
  long = single-particle cryo–electron microscopy}
\DeclareAcronym{Cryo-EM}{
  short = Cryo-EM,
  long = cryogenic electron microscopy}
\DeclareAcronym{LDDMM}{
  short = LDDMM,
  long = large deformation diffeomorphic metric mapping}
\DeclareAcronym{BFGS}{
  short = BFGS,
  long = Broyden–Fletcher–Goldfarb–Shanno}
\DeclareAcronym{ML-EM}{
  short = ML-EM,
  long = expectation-maximization maximum-likelihood}
\DeclareAcronym{MAP}{
  short = MAP,
  long = maximum a posteriori}
\DeclareAcronym{CTF}{
  short = CTF,
  long = contrast transfer function}  
\let\cref@old@stepcounter\stepcounter
\def\stepcounter#1{%
  \cref@old@stepcounter{#1}%
  \cref@constructprefix{#1}{\cref@result}%
  \@ifundefined{cref@#1@alias}%
    {\def\@tempa{#1}}%
    {\def\@tempa{\csname cref@#1@alias\endcsname}}%
  \protected@edef\cref@currentlabel{%
    [\@tempa][\arabic{#1}][\cref@result]%
    \csname p@#1\endcsname\csname the#1\endcsname}}
\newtheorem{remark}[theorem]{Remark}
\newcommand{\stochastic}[1]{\mathsf{#1}}
\newcommand{\stochasticgreek}[1]{{\sansmath #1}}
\newcommand{\Cdot}{\,\cdot\,}
\newcommand{\dint}{\,\mathrm{d}}
\newcommand{\subalign}[1]{%
  \vcenter{%
    \Let@ \restore@math@cr \default@tag
    \baselineskip\fontdimen10 \scriptfont\tw@
    \advance\baselineskip\fontdimen12 \scriptfont\tw@
    \lineskip\thr@@\fontdimen8 \scriptfont\thr@@
    \lineskiplimit\lineskip
    \ialign{\hfil$\m@th\scriptstyle##$&$\m@th\scriptstyle{}##$\hfil\crcr
      #1\crcr
    }%
  }%
}
\newcommand{\Real}{\mathbb{R}}
\newcommand{\opStyle}[1]{\operatorname{\mathcal{#1}}}
\newcommand{\argmin}{\operatorname*{arg\,min}}
\newcommand{\inpro}[3][{}]{ \left\langle #2 , #3 \right\rangle_{#1}}
\newcommand{\traceop}{\operatorname{tr}}
\newcommand{\LpSpace}{\mathscr{L}}
\newcommand{\ContSpace}{\mathscr{C}}
\newcommand{\LieGroup}{G}
\newcommand{\LieGroupMetric}{\operatorname{d}_{\LieGroup}}
\newcommand{\gelem}{g}
\newcommand{\gelemother}{h}
\newcommand{\gcurve}{\gamma}
\newcommand{\LieAlgebra}{\mathfrak{g}}
\newcommand{\AdjLieBracket}{\operatorname{ad}}
\newcommand{\AdjRep}{\operatorname{Ad}}
\newcommand{\aelem}{u}
\newcommand{\aelemother}{\tilde{u}}
\newcommand{\acurve}{\nu}
\newcommand{\dacurve}{\eta}
\newcommand{\momentum}{m}
\newcommand{\ShapeSpace}{V}
\newcommand{\template}{w}
\newcommand{\target}{y}
\newcommand{\GroupAction}{\Phi}
\newcommand{\DataSpace}{Y}
\newcommand{\data}{y}
\newcommand{\dataother}{\tilde{y}}
\newcommand{\stdata}{\stochastic{\data}}
\newcommand{\datanoise}{e}
\newcommand{\stdatanoise}{\stochastic{\datanoise}}
\newcommand{\residual}{r}
\newcommand{\TEMoperator}{\opStyle{A}}
\newcommand{\ForwardOp}{\opStyle{F}}
\newcommand{\EnergyFunc}{\opStyle{E}}
\newcommand{\RegFunc}{\opStyle{S}}
\newcommand{\LossFunc}{\opStyle{L}}
\newcommand{\bbpos}{a}
\newcommand{\bbpostemplate}{\bbpos_0}
\newcommand{\relbbpos}{\amod}
\newcommand{\relbbposother}{\amodother}
\newcommand{\numPep}{N}
\newcommand{\numImgs}{m}
\newcommand{\MapSpace}{X}
\newcommand{\particle}{p}
\newcommand{\AtmModSpace}{A}
\newcommand{\amod}{v}
\newcommand{\amodother}{\amod'}
\newcommand{\amolecule}{\amod^*}
\newcommand{\AtmToMapOp}{\opStyle{B}}
\newcommand{\AtmToMapOpPlane}{\opStyle{B}_{2D}}
\newcommand{\amodplane}{p}
\newcommand{\PoseGroup}{G}
\newcommand{\pose}{\phi}
\newcommand{\stpose}{\stochasticgreek{\pose}}
\newcommand{\AtmModDeforOp}{\opStyle{V}}
\newcommand{\amoddefor}{\vartheta}
\newcommand{\AmodDeforSpace}{\mathfrak{A}}
\newcommand{\MatrixGroup}{\mathbb{M}}
\newcommand{\SE}{\operatorname{SE}}
\newcommand{\SO}{\operatorname{SO}}
\newcommand{\SOLieAlgebra}{\mathfrak{so}}
\newcommand{\GLLieAlgebra}{\mathfrak{gl}}
\newcommand{\rotmat}{\rho}
\newcommand{\residue}{r}
\newcommand{\projbbpos}{p}
\newcommand{\gauss}{\varphi}
\newcommand{\stt}{\stochastic{t}}
  \newcommand{\revision}[1]{{\color{magenta} #1}}
 \newcommand{\revision}[1]{#1}
\title{Geometric shape matching for recovering protein conformations from single-particle Cryo-EM data \thanks{{\color{blue1} Funding: }{This work was funded by the Wallenberg AI, Autonomous Systems and Software
Program (WASP), the Knut and Alice Wallenberg Foundation grant WAF2019.0201, and the Swedish Research Council grants 2020-03107 and 2022-03453.}}}
\author{Erik Jansson \thanks{\email{erikjans@chalmers.se}, \textsc{Department of mathematical sciences, Chalmers University of Technology and Gothenburg University, Gothenburg, Sweden}}
\and Jonathan Krook \thanks{\email{jkroo@kth.se}, \textsc{Department of mathematics, KTH - Royal Institute of Technology, Stockholm, Sweden}}
\and Klas Modin \thanks{\email{klas.modin@chalmers.se}, \textsc{Department of mathematical sciences, Chalmers University of Technology and Gothenburg University, Gothenburg, Sweden}}
\and Ozan Öktem \thanks{\email{ozan@kth.se}, \textsc{Department of mathematics, KTH - Royal Institute of Technology, Stockholm, Sweden}}}
\begin{document}
\maketitle

\begin{abstract}
    We address recovery of the three-dimensional backbone structure of single polypeptide proteins from \ac{Cryo-SPA} data.
    \Ac{Cryo-SPA} produces noisy tomographic projections of electrostatic potentials of macromolecules. 
    From these projections, we use methods from shape analysis to recover the three-dimensional backbone structure. 
    Thus, we view the reconstruction problem as an indirect matching problem, where a point cloud representation of the protein backbone is deformed to match 2D tomography data. 
    The deformations are obtained via the action of a matrix Lie group.
    By selecting a deformation energy, the optimality conditions are obtained, which lead to computational algorithms for optimal deformations. 
    We showcase our approach on synthetic data, for which we recover the three-dimensional structure of the backbone. 
\end{abstract}

\begin{keywords}
Inverse problems, Tomography, Regularization, Shape analysis, Manifold-valued data, Optimization, Lie groups, Electron microscopy, Single particle analysis, Cryogenic electron microscopy
\end{keywords}

\begin{MSCcodes}
53Z50, 90C26, 68U10, 53Z10,92C55
\end{MSCcodes}

\acresetall
\section{Introduction}
\label{sec:intro}
Biological macromolecular complexes (biomolecules), like proteins and nucleic acids, form the molecular machinery that sustains both life and disease. 
An essential part of biomedical research is therefore to understand the functionality of biomolecules, 
for example to design an antiviral drug as a protease that blocks the ``spike'' protein which binds the SARS-CoV-2 virus to our cells. 
Similarly, mRNA vaccines consist of nucleic acids in a large lipid nanoparticle that results in the synthesis of antibodies.
\subsection{Structures of biomolecules}
The functionality of a biomolecule, in particular of a protein, is largely governed by the dynamics of its 3D shape (structure).
Much of structural biology is devoted to developing experimental and computational methods for determining the structure and dynamics of biomolecules.

Historically, the structure of proteins has been obtained through X-ray crystallography.
As the name suggests, the approach requires one to crystallize the protein. 
One can then experimentally measure X-ray scattering data from such crystals. 
Computational methods (phase recovery) combined with experimental protocols are then used to recover the 3D electron density function of the protein from the scattering data.
The final step is to fold the primary structure of the protein (which can be determined from the gene sequence that codes for its synthesis) into the recovered 3D electron density function. This step is fairly straightforward if the density function has sufficient 3D resolution and the folding yields an atomic model of the protein (structure).

X-ray crystallography has been widely successful, and it is currently the standard approach in structural biology for protein structure determination.
It is, however, limited to biomolecules that can be crystallized. 
Furthermore, the approach is designed for recovering a single static structure\footnote{There are approaches, like time-resolved crystallography, that attempt to use X-ray crystallography to recover a dynamically evolving structure.}.
In particular, recovering structural dynamics of biomolecules in tissue (in-situ) or aqueous solution (in-vitro) remains an open problem. 
Development of experimental and computational methods for this purpose is therefore an active research area. 
A promising approach is based on 3D electron microscopy, which we consider next.

\subsection{3D electron microscopy}
Structural biology has in the last decade undergone a revolution with a new generation of computational and experimental methods for recovery and analysis of biomolecular structures. 
Most notably is the development of methods based on \ac{TEM} imaging.

The idea to combine principles from tomography with \ac{TEM} imaging to recover 3D structures of biomolecules can be traced back to \cite{Rosier:1968aa}, which estimated the 3D structure of a biomolecule (bacteriophage T4 tail) by taking advantage of its symmetry.
Since then, the field of 3D electron microscopy has developed along several directions, and most notably towards methods for structural studies of biomolecules that lack specific symmetries.

A key difficulty in using tomographic principles for 3D electron microscopy is to obtain the necessary amount of 2D \ac{TEM} images of the biomolecule from sufficiently many different views.
This is possible for vitrified specimens that consist of many isolated (structurally identical) copies of the biomolecule in varying orientations.
\Ac{Cryo-SPA} is a 3D electron microscopy technique that is specifically designed for such data. 
It was introduced in \cite{Frank:1975aa,Heel:1981aa,Vainstein:1986aa,Heel:1987aa} and has undergone much progress regarding computational methods \cite{Bai:2015aa,Nogales:2015aa,Cheng:2015aa,Carazo:2015aa,Sigworth:2016aa,Cheng:2018aa,Singer:2018aa,Singer:2020aa,Bendory:2020aa,Benji:2020aa}, sample preparation \cite{Liu:2023aa,Venien-Bryan:2023aa,Cheng:2024aa}, and instrumentation \cite{Faruqi:2015aa,Williams:2019aa}.
This development was recognized through the 2017 Nobel Prize in Chemistry, and it has turned \ac{Cryo-SPA} into a major tool in structural biology for studying large rigid biomolecules (homogenous particles) at (near) atomic resolution \cite{Cheng:2015ab,Nakane2020,Herzik:2020aa} that are difficult or impossible to crystallize.
This is essential for both life science research \cite{Nogales:2016aa} and drug discovery \cite{Renaud:2018aa,Robertson:2022aa}.

\subsubsection{Sample preparation and data acquisition}\label{sec:SPA:Data}
The starting point in \ac{Cryo-SPA} is to prepare an aqueous solution that contains many instances of the biomolecule of interest.
Each instance of the biomolecule is referred to as a \emph{particle}.
Cryofixation is then applied to the solution, resulting in a thin, solid slab of vitrified ice.
This method was first introduced in \cite{Dubochet:1982aa,Adrian:1984aa} as a technique to solidify a specimen (needed since specimens are imaged in vacuum) that avoids introducing additional unwanted heavy-metal stains or other contrast enhancements.

The vitrified slab is then imaged in a \ac{TEM} where high-energy electrons scatter against the specimen in vacuum, resulting in several large 2D phase contrast \ac{TEM} images (micrographs).
Image processing techniques (particle picking) are then used to extract 2D image patches (particle images) from such a \ac{TEM} micrograph. 
The particle images are chosen so that each of them represents a (noisy) 2D ``projection image'' of a \emph{single} particle.
Mathematically, one can view the particle images as elements $\data_1, \ldots, \data_\numImgs \in \DataSpace$ where $\DataSpace$ (data space) is the vector space of (digitized) $\Real$-valued $\LpSpace^2$-functions on $\Real^2$ (detector plane).

\subsubsection{3D reconstruction}\label{sec:3Dreco}
The task in \emph{3D reconstruction} is to recover a volumetric representation of the particles from their corresponding 2D particle images.
These 2D images are in \ac{TEM} imaging generated by probing the particle in the specimen with high-energy electrons that scatter against the electrostatic potential generated by the particle and its surrounding aqueous medium.
This potential is mathematically formalized by the \emph{3D map}, which is  a real-valued function on $\Real^3$ that represents this potential (another related function is the electron density map).
3D reconstruction aims to recover the particle specific 3D maps from their corresponding 2D particle images. 
For each particle, there is, however, only a single 2D particle image.
Reconstruction is therefore not feasible unless one makes further assumptions, which in \ac{Cryo-SPA} is to assume that \emph{all} particles are copies of the \emph{same (isolated) biomolecule}.

If the biomolecule is rigid, then the above assumption translates to assuming that all particles have (approximately) identical 3D structure (\emph{homogeneous particles}). 
One can then use tomographic reconstruction techniques to computationally recover the 3D map of the biomolecule (\emph{3D map estimation}), but this requires one to address the challenge that each particle has an unknown 3D orientation/translation w.r.t.\@ the \ac{TEM} optical axis (\emph{pose}).
Additionally, the 2D particle images have \emph{extremely low signal-to-noise ratio}, e.g., the power of the noise can easily be ten times higher than the signal\footnote{A reason for this is that vitrified biological specimens act as weak phase contrast objects when imaged in a \ac{TEM}, so resulting images have very low contrast. 
Furthermore, such specimens are sensitive to radiation damage, so \ac{TEM} images need to be acquired using a low dose (200--400~$\text{electrons}/\text{nm}^2$).}.

\revision{
A further challenge arises when the biomolecule is flexible where particles have varying 3D structure (\emph{heterogeneous particles}) that represent different conformations of the biomolecule. 
Alongside handling unknown poses, one now needs to recover a dynamical 3D map, which is significantly more difficult. \Cref{sec:RelatedWork} has a brief survey on reconstruction methods in this setting.
In this work, however, we restrict our experiments for joint reconstruction and model building to the homogenous setting where the biomolecule has a single conformation. \Cref{sec:outlook} outlines how to extend our approach to the heterogeneous particle setting.
}

\subsubsection{Model building}\label{sec:ModelBuilding}
The final step is \emph{model building}, which refers to building (pseudo) atomic models for the particles.
This is an essential part of using \ac{Cryo-SPA} in biological research, since interpretation and identification of biological functionality often relies on having access to (pseudo) atomic models.

If the biomolecule is a single-chain protein, then model building translates into the task of ``folding'' the primary structure of the protein (which is known) into the particle specific 3D map.
This is in principle possible if the 3D map has sufficient resolution (about 0.4~nm or better), but the process is semi-automatic, and it requires a large degree of human intervention.
Furthermore, estimating the 3D map with such high resolution from \ac{Cryo-SPA} data is only reachable for biomolecules that are rigid (no dynamics) \cite{Nakane2020,Bouvier2022}, so \emph{model building remains an open challenge for flexible biomolecules} (heterogenous particles).

\subsection{Shape matching applied to \acs{Cryo-EM}}\label{sec:CryoEMShapeMatching}
The idea proposed in this paper is to model flexible proteins with geometric shape analysis, and in particular \emph{shape matching}.

In essence, shape analysis concerns finding the energy-minimizing way to \emph{deform} an initial shape (template) to a target shape. 
The framework was developed by Grenander and others in the context of \emph{computational anatomy} \cite{Gr1993,grenander2007}.
It has had various applications, like in medical image analysis  to characterize, find, or understand disease via abnormal anatomical deformations of organs such as the brain or the lungs \cite{Ceritoglu2013,Risser2013}.
Mathematically, new shapes are obtained by deforming a template.
The deformation is often modelled by the action of a group of diffeomorphisms, but in its abstract setting, e.g., as presented by~\cite{Bruveris2013}, the theory allows the deformations to be generated by other means than diffeomorphisms.
More specifically, general shape changes are obtained by letting a Lie group act on a ``shape space'', which can be, e.g., points, curves, surfaces, volumetric maps, or combinations thereof.
\emph{Shape matching} then consists of deforming a template shape so that it matches a target shape, by minimizing a user-defined matching energy.
For more information on shape matching and its underlying mathematical theory, see the book by Younes~\cite{Younes2010} and references therein. 

\subsubsection{Joint 3D reconstruction and model building}\label{sec:Joint3DRecoModel} 
In the following, we show how shape theory allows us to jointly perform 3D reconstruction (\cref{sec:3Dreco}) and model building (\cref{sec:ModelBuilding}). 
These two steps are in \ac{Cryo-SPA} commonly performed sequentially, i.e., one first performs 3D reconstruction (\cref{sec:3Dreco}) and the resulting  reconstructed 3D map(s) is used as input to model building (\cref{sec:ModelBuilding}). 
Both steps involve solving ill-posed inverse problems, so usage of appropriate priors is essential for obtaining reliable and sufficiently accurate solutions.
The key drawback with the sequential approach is that it becomes difficult to make the best use of the a priori information. 
More precisely, much of the a priori information about biomolecular structures and their dynamics is in the form of constraints that apply to (pseudo) atomic models.
These priors apply to model building, and the performance is expected to improve if one can also use (some of) them in 3D reconstruction.
This is possible if one adopts a joint approach in which 3D reconstruction is performed jointly with model building. 
As we show next, with appropriate adaptation of shape matching it is possible to formulate and solve such a joint inverse problem.

Joint 3D reconstruction and model building can now be formalized, as the task  is to estimate particle specific (pseudo) atomic models $\amolecule_1, \ldots, \amolecule_{\numImgs} \in \AtmModSpace$ that represent different conformations of the bio\-molecule from \ac{Cryo-SPA} data $\stdata_1, \ldots, \stdata_{\numImgs} \in \DataSpace$.
Our approach views these conformations as deformations of a \emph{common} (pseudo) atomic model $\amolecule \in \AtmModSpace$ (template).

To proceed, we introduce a series of operators that model various aspects of \ac{TEM} imaging of flexible biomolecules.
First is to model biophysically possible deformations of (pseudo) atomic models with an operator $\AtmModDeforOp \colon \AmodDeforSpace \times \AtmModSpace \to \AtmModSpace$ with $\AmodDeforSpace$ denoting the set that parametrizes such deformations.
Next, particle-specific poses are represented as actions $\PoseGroup \times \AtmModSpace \ni (\pose,\amod) \to \pose.\amod \in \AtmModSpace$ of $\PoseGroup$ on $\AtmModSpace$.
Furthermore, the operator $\AtmToMapOp \colon \AtmModSpace \to \MapSpace$  models the 3D map generated by a (pseudo) atomic model, see \cite[Appendix~B]{Rullgrad:2011aa} for how defining $\AtmModSpace$ in the setting of \ac{TEM} imaging.
Finally, $\TEMoperator \colon \MapSpace \to \DataSpace$ is the operator that models the generation of a 2D image from a 3D map (\ac{TEM} image formation) in the absence of noise.
Under certain assumptions (weak phase object approximation), which hold in \ac{Cryo-SPA}, one can approximate $\TEMoperator$ with a parallel beam ray transform along the \ac{TEM} optical axis followed by a 2D convolution in the detector plane (optics CTF), see \cite{FanelliOktem2008,Rullgrad:2011aa} and \cite[section~4]{Oktem:2015aa}.

We now formalize joint 3D reconstruction and model building as estimating the template (pseudo) atomic model $\amolecule \in \AtmModSpace$ along with particle-specific deformation parameters $\amoddefor_1,\ldots, \amoddefor_\numImgs \in \AmodDeforSpace$ from \ac{Cryo-SPA} data $\stdata_1, \ldots, \stdata_{\numImgs} \in \DataSpace$ where
\begin{equation}\label{eq:Joint3DModelHetro}
\stdata_i := \TEMoperator\Bigr(\AtmToMapOp\bigl(\pose_i.\AtmModDeforOp(\amoddefor_i,\amolecule)\bigr)\Bigl) 
     + \stdatanoise_i
    \quad
    \text{for $i=1,\ldots,\numImgs$ with $\pose_i \in \PoseGroup$ unknown.}
\end{equation}
Then, $\amod_i :=\AtmModDeforOp(\amoddefor_i,\amod)$ for $i=1,\ldots,\numImgs$ are the particle specific (pseudo) atomic models representing the various conformations of the biomolecule.
Furthermore, $\particle_i := \AtmToMapOp(\amod_i)$ will be the corresponding 3D map of the $i$:th particle.

\subsection{Overview of paper and related work}
The paper develops methods from shape analysis for estimating a (pseudo) atomic model of a biomolecule from \ac{Cryo-SPA} particle images (data), i.e., to perform joint 3D map estimation and model building (\cref{sec:Joint3DRecoModel}).
The approach is general and applies to many types of (pseudo) atomic models. 

\subsubsection{The specific setting}
To simplify the presentation, we focus  on biomolecules that are given as single-chain proteins. 
We also assume particle specific poses are known and particle images are from the same conformation (homogeneous particles), so the variant of \cref{eq:Joint3DModelHetro} for joint 3D reconstruction and model building that we consider is 
\begin{equation}\label{eq:Joint3DModelHomo}
\stdata_i := 
(\TEMoperator\circ\AtmToMapOp)(\pose_i.\amolecule) 
     + \stdatanoise_i
    \quad
    \text{for $i=1,\ldots,\numImgs$ with $\pose_i \in \PoseGroup$ known.}
\end{equation}
Finally, we disregard side-chain conformations, so elements in $\AtmModSpace$ represent possible backbone conformations of a protein that is given in terms of its primary structure.

\subsubsection{Structure of paper}
As stated above, we apply shape matching to recover the backbone conformation of a protein from \ac{Cryo-SPA} data of that protein.

The protein is assumed to be in a single conformation, so \ac{Cryo-SPA} data is from homogenous particles.
Next, the shape space represents possible backbone conformations of the linear sequence of amino acids (primary structure) that makes up the protein.
We identify a suitable Lie group that acts on this shape space by folding the backbone in a manner consistent with its possible conformations.
Indeed, instead of acting on protein backbones with diffeomorphisms, we ensure proper, rigid transformations of the chains by restricting shape matching to a finite-dimensional matrix Lie group.
Thereby, we avoid having to work with the infinite-dimensional Lie group of diffeomorphisms, which simplifies not only the theory but also comes with computational benefits. 

Contrary to standard shape matching, however, the template and target shapes in our setting reside in different spaces, which are related to each other via a forward model, as described in the following.

The paper is structured as follows. 
In \cref{sec:lddmm} we present a brief an overview of shape matching in a general setting and outline some gradient-based computational approaches for matching. 
In \cref{sec:cryoemmath} we describe a mathematical model for the protein backbone that is required for defining the shape space. 
We also describe the forward model and perform the computations necessary for shape matching in the specific setting of model building in \ac{Cryo-SPA}. 
In particular, \cref{sec:adapt} provides explicit expressions for gradients needed in shape matching. 
\Cref{sec:experiment} contains a computational example. More specifically, we apply the framework described in the article to capture closed-to-open dynamics of an adenylate kinase protein. 
We end the paper with some conclusions and an outlook on possible future research directions in \cref{sec:outlook}. 
In particular, we discuss how to extend the shape analysis method to the heterogeneous particle setting.
We also mention how to extend the approach to include pose estimation and conformations of side-chains.

\subsubsection{Related work}\label{sec:RelatedWork}
3D map estimation and model building are in \ac{Cryo-SPA} commonly performed sequentially.
The latter amounts to interpreting an estimated 3D map with atomic models, a task that requires expertise and is labor-intensive.
This is in \ac{Cryo-SPA} commonly performed as a separate step applied to the output from 3D map estimation, and many papers deal with automating this post-processing step (\cref{sec:ModelBuilding}).

A different approach that is less explored is to jointly perform 3D reconstruction and model building (\cref{sec:Joint3DRecoModel}).
Here, instead of recovering particle specific 3D maps from \ac{Cryo-SPA} data, one directly recovers the (pseudo) atomic structures of the particle-specific conformations of the biomolecule.
These conformations are given in terms of their (pseudo) atomic structures that are elements in some manifold $\ShapeSpace$.

A natural strategy for joint 3D reconstruction and model building would be to mimic current methods for joint pose and 3D map estimation, namely to set up an iterative refinement scheme that alternates between updating the 3D map and the model. 
The drawback with such a strategy is the difficulty in accounting for the geometry of $\ShapeSpace$ during reconstruction.

Mathematical theory of shape matching combined with variational regularization offers means for performing reconstruction while accounting for the geometry of $\ShapeSpace$.
The overall idea is to view the particles (representing various conformations of the biomolecule) as different deformations of a common template. 
This is consistent with the assumption in \ac{Cryo-SPA} that all particles represent the same biomolecule.
Joint 3D reconstruction and model building can then be seen as matching the ``shape'' of a template against an indirectly observed target (indirect registration). 
An example of indirect registration in tomographic reconstruction was given in \cite{Oktem:2017aa}, which then was extended along various directions \cite{Chen:2018aa,Chen:2019aa,Gris:2020aa,Hauptmann:2023aa}.
Shape matching used in these papers considers shapes of 2D/3D volumetric maps, whereas this paper considers matching shapes of (pseudo) atomic models.
Differential geometric tools used for this are similar to those developed in \cite{Diepeveen:2024aa} for modelling protein dynamics. 

Another differential geometric approach for joint 3D reconstruction and model building is \cite{EsteveYage2023}.
It is based on estimating the torsion and bond angles of the atomic model in each conformation as a linear combination of the eigenfunctions of the Laplace operator in the manifold $\AtmModSpace$ of conformations. 

\revision{
We conclude with briefly mentioning work on reconstruction when the biomolecule is flexible (heterogeneous particles), see \cite{Sorzano2019,Singer:2020aa, Bendory:2020aa} for surveys on this topic.
Many approaches assume the biomolecules adopts a small, fixed, number of conformational states (discrete heterogeneity).
A classification step can then assign a conformation state to each 2D particle image allowing for 3D reconstruction of each conformation. Popular software packages, like RELION \cite{Scheres:2012ab}, Thunder \cite{Hu:2018aa}, Frealign/cisTEM \cite{Grant:2018aa}, SPHIRE \cite{Moriya:2017aa}, and cryoSPARC \cite{Punjani:2017aa} offer this possibility.
Another class of methods assume that the dynamics of the biomolecule can be modelled as a discrete number of independently moving, rigid bodies (multi-body refinement) \cite{Nakane:2018aa,Wong:2014aa,Zhou:2015aa,Bai:2015aa,Ilca:2015aa}.
More elaborate approaches perform reconstruction while heterogeneity is modelled by molecular dynamics simulations \cite{Vuillemot2023a, Vuillemot2023b, Dingeldein2024}. 
A different class of methods assume that the structural variability of the biomolecule forms a manifold that is then learned from data (2D particle images). This is then combined with reconstructing a 3D map at each point on this manifold   \cite{Dashti:2014aa,Schwander:2014aa,Frank:2016aa}.
Closely related approaches model structural variability by means of normal mode analysis \cite{Jin2014} or techniques from Riemannian geometry \cite{Diepeveen:2024aa}. 
Alternatively, one can also represent conformational states of the biomolecules as high-dimensional objects \cite{Lederman:2020aa}. 
The above cited methods rely on some form of model for the dynamics of the biomolecule.
There are also methods that avoid this, like those that parametrise a latent space of conformations by a deep neural network \cite{Zhong:2021aa} or those that use a linear model of variability \cite{Anden:2018aa}.
The latter has been extended using manifold learning techniques \cite{Moscovich:2020aa} that are better adapted for continuous variability. See also \cite{Chen2021, Chen2023, Rosenbaum2024+} for 
further learning based approaches.}

\section{Shape matching}
\label{sec:lddmm}
In general, matching deformable objects refers to the task of aligning selected features of a template object (e.g., landmarks) with corresponding features of a target object.
As discussed in \autoref{sec:CryoEMShapeMatching} above, our premise is to use shape matching, largely following the setup described by \cite{Bruveris2013}.
\revision{For the basic theory of shape matching, we refer to \cite{Younes2010} and for is fundamental geometric underpinnings to \cite{Marsden1999}.}

The starting point is to consider the set $\ShapeSpace$ of deformable shapes (shape space) that we seek to match against each other. $\ShapeSpace$ is often a vector space, but this is not a necessity and $\ShapeSpace$ is sometimes endowed with a manifold structure. 

Deformations used for matching deformable objects in $\ShapeSpace$ against each other are given by acting on $\ShapeSpace$ with a group $\LieGroup$. 
The group action 
$\GroupAction \colon \LieGroup \times \ShapeSpace \to \ShapeSpace$ represents deformations of elements in $\ShapeSpace$ that are parametrized by elements in $\LieGroup$.
Additionally, in shape matching one also assumes that $\LieGroup$ has a manifold structure where the group operations of multiplication and inversion are smooth maps, i.e., $\LieGroup$ is a Lie group with Lie algebra denoted by $\LieAlgebra$.
\begin{remark}
The typical treatment of shape matching considers diffeomorphisms acting on various deformable objects, like point clouds, curves, surfaces, 3D volumes or currents. 
The Lie group $\LieGroup$ is here an infinite dimensional subgroup of the group of diffeomorphisms, e.g., one commonly works  with Fréchet Lie groups of diffeomorphisms.
This results in a rather intricate mathematical theory. 
In contrast, in our adaptation of shape matching for modelling deformation of protein structures, we only need to consider a finite-dimensional Lie group $\LieGroup$.
\end{remark}
\revision{
Indirect matching is the task to match a template $\template \in \ShapeSpace$ against a target that is indirectly observed through data $\data \in \DataSpace$.
This typically arises in an inverse problem setting, where the target is given by (noisy) indirect observations. 
An important component is the data generation process, which is encoded by an operator (\emph{forward model})
\begin{equation}\label{eq:FwdOp}
  \ForwardOp \colon \ShapeSpace \to \DataSpace 
\end{equation}  
that models how a deformable object in $\ShapeSpace$ generates noise-free data in $\DataSpace$.
The indirect registration is performed by finding a group element $\gelem \in \LieGroup$ that parametrizes a deformation $\GroupAction \colon \LieGroup \times \ShapeSpace \to \ShapeSpace$ that minimizes the \emph{matching energy}:
\begin{equation}\label{eq:InDirectRegEnergyInGroup}
  \gelem \mapsto
\LossFunc_{\DataSpace}\Bigl((\ForwardOp \circ \GroupAction)\bigl(\gelem,\template\bigr),\target\Bigr) 
    +  \lambda \LieGroupMetric^2\bigl(\gelem,e \bigr)   \quad\text{for $\gelem \in \LieGroup$.}
\end{equation}
Here, $e \in \LieGroup$ is the identity element, 
$\LossFunc_{\DataSpace} \colon \DataSpace \times \DataSpace \to \Real$ (\emph{data fidelity} functional) quantifies similarity in $\DataSpace$, $\LieGroupMetric \colon \LieGroup \times \LieGroup \to \Real$ is a distance on $\LieGroup$, and $\lambda > 0$ determines the amount of regularization.

An issue with solving \cref{eq:InDirectRegEnergyInGroup} is that the evaluation of the $\LieGroupMetric$-distance itself involves solving an optimization problem, so the minimization in \cref{eq:InDirectRegEnergyInGroup} is a coupled optimization problem over a Lie group. 
One could consider manifold optimization methods \cite{Taylor94,boumal2023intromanifolds}, but there is a simpler way when the distance $\LieGroupMetric$ corresponds to a right-invariant Riemannian metric.
To see this, note first that the minimization of the functional in \cref{eq:InDirectRegEnergyInGroup} can be replaced with minimizing a curve $\gcurve \colon [0,1] \to \LieGroup$ in the group 
\begin{equation}\label{eq:InDirectRegEnergyInCurveOnGroup}
  \gcurve \mapsto
\LossFunc_{\DataSpace}\Bigl( (\ForwardOp\circ \GroupAction)\bigl(\gcurve(1),\template\bigr),\target\Bigr) 
    +  \lambda\int_0^1 \langle \dot\gcurve(t),\dot\gcurve(t) \rangle_{\gcurve(t)} \dint t  
    \quad\text{such that $\gcurve(0)=e$,}
\end{equation}
with $\langle \Cdot , \Cdot \rangle$ denoting a right-invariant metric on $\LieGroup$. 
Next, if $\gcurve$ in \cref{eq:InDirectRegEnergyInCurveOnGroup} is sufficiently smooth, then one can express it as a solution to the flow equation
\begin{equation}\label{eq:flow2}
  \begin{cases}
  \dot\gcurve(t) = \dd R_{\gcurve(t)}(\acurve(t)) &
   \\[0.5em]
  \gcurve(0) = e &
  \end{cases}
  \quad\text{for some $\acurve \colon [0,1] \to \LieAlgebra$,}
\end{equation}
where $\dd R_\gelem\colon \LieAlgebra \to T_\gelem \LieGroup$ is the derivative of the right translation map $R_\gelem \colon \LieGroup \to \LieGroup$ at the identity, i.e., the map $R_\gelem(\gelemother) = \gelemother\gelem$.
Henceforth, $\gcurve^{\acurve} \colon [0,1] \to \LieGroup$ denotes the curve solving \cref{eq:flow2} for a specific choice of $\acurve \colon [0,1] \to \LieAlgebra$.
Note that when $\LieGroup$ is connected, there exists for any $\gelem\in \LieGroup$ a curve $\gcurve^{\acurve}$ such that $\gcurve^{\acurve}(1)=\gelem$.
This correspondence between curves in $\LieGroup$ and $\LieAlgebra$ allows us to rephrase the minimization of \cref{eq:InDirectRegEnergyInCurveOnGroup} over curves in the Lie group $\LieGroup$ as a minimization over curves $\acurve \colon [0,1] \to \LieAlgebra$ in the Lie algebra $\LieAlgebra = T_e\LieGroup$ (which has a vector space structure):
\begin{equation}\label{eq:InDirectRegEnergyInCurveOnAlg}
  \acurve \mapsto
\LossFunc_{\DataSpace}\Bigl( (\ForwardOp\circ \GroupAction)\bigl(\gcurve^{\acurve}(1),\template\bigr),\target\Bigr) 
+  \lambda \int_0^1 \bigl\langle \dot\gcurve^{\acurve}(t),\dot\gcurve^{\acurve}(t) \bigr\rangle_{\gcurve^{\acurve}(t)} \dint t.
\end{equation}

The metric $\langle \Cdot , \Cdot \rangle_\cdot$ can be constructed by  specifying an inner product $(\Cdot,\Cdot)$ on $\LieAlgebra$. 
The metric at the remaining points of $\LieGroup$ is then given by translating the inner product from the right. 
We here consider inner products on $\LieAlgebra$ of the form
\begin{equation}\label{eq:IOperator}
    (\aelem,\aelem) := \bigl\langle \mathbb{I}\aelem,\aelem 
\bigr\rangle_{\LieAlgebra^*,\LieAlgebra}
    \quad\text{for $\aelem \in \LieAlgebra$}
\end{equation}
where $\mathbb{I} \colon \LieAlgebra \to \LieAlgebra^*$ is a symmetric positive semi-definite operator and $\langle \Cdot,\Cdot \rangle_{\LieAlgebra^*,\LieAlgebra}$ denotes duality pairing between the algebra and its dual.
By right-invariance of the metric, we have
\[ 
\int_0^1 \bigl\langle \dot\gcurve^{\acurve}(t),\dot\gcurve^{\acurve}(t) \bigr\rangle_{\gcurve^{\acurve}(t)} \dint t
 = \int_0^1 \bigl\langle \acurve(t),\acurve(t)\bigr\rangle_{e} \dint t .
\]
Hence, the energy functional we seek to minimize during matching reads as 
\begin{equation}\label{eq:general_energy}
  \EnergyFunc(\acurve) := 
  \LossFunc_{\DataSpace}\Bigl((\ForwardOp\circ \GroupAction)\bigl(\gcurve^{\acurve}(1),\template\bigr),\target\Bigr) 
    + \lambda \RegFunc(\acurve)
\end{equation}
where $\lambda > 0$ and $\RegFunc(\acurve)\in \Real$ is the regularization term given by 
\begin{equation}\label{eq:RegFunc}
\RegFunc(\acurve) 
    := \int_0^1 \bigl\langle \acurve(t),\acurve(t)\bigr\rangle_{e} \dint t.
\end{equation}
Choice of $\lambda>0$ in \cref{eq:general_energy} balances the need to penalize unfeasible deformations with \cref{eq:RegFunc} against the need to match the target as quantified by the $\LossFunc_{\ShapeSpace}$-term (data fidelity) in \cref{eq:general_energy}.
}

\subsection{Approaches for minimization}
The task of minimizing the functional in \cref{eq:general_energy} over curves $\acurve \colon[0,1]\to\LieAlgebra$ can be reduced to a dynamical formulation. 
To see this, note first that the data fidelity depends on only $\gcurve^{\acurve}(1)\in \LieGroup$, i.e., on the \emph{final} time point of the curve $t \mapsto \gcurve^{\acurve}(t)$.
Hence, an optimal curve $t \mapsto \acurve(t)$ must follow the dynamics determined by the action functional consisting only of the regularization term of \cref{eq:general_energy} defined in \cref{eq:RegFunc}, i.e.,
\begin{equation}\label{eq:lagrangian}
    \RegFunc(\acurve) 
    :=
    \int_0^1   \langle \momentum(t),\acurve(t)\rangle_{\LieAlgebra^*,\LieAlgebra} \dint t
\end{equation}
where $\momentum  \colon [0,1] \to \LieAlgebra^*$ (\emph{momentum}) is the dual of $\acurve \colon [0,1] \to \LieAlgebra$. 
Due to the right-invariance of the metric, standard methods in geometric mechanics \cite[Theorem~13.5.3]{Marsden1999} yield that $t \mapsto \momentum(t)$ evolves according to the \emph{Euler--Arnold equation}:
\begin{equation}\label{eq:eulerarnold}
    \dot \momentum(t) = \AdjLieBracket_{\acurve}^*\bigl(\momentum(t)\bigr)
    \quad\text{for $t \in [0,1]$.}
\end{equation}
Here, $\AdjLieBracket_{\acurve}^* \colon \LieAlgebra^* \to \LieAlgebra^*$ is defined implicitly by the requirement that the mapping
\[ 
\LieAlgebra \times \LieAlgebra^* \ni \bigl(\acurve(t),\momentum(t)\bigr) \mapsto \AdjLieBracket_{\acurve}^*\bigl(\momentum(t)\bigr) \in \LieAlgebra^* 
\]
is for any $t \in [0,1]$ the adjoint of the Lie bracket on $\LieAlgebra$ with respect to the inner product. 
Thus, performance for approaches that perform matching by minimizing an energy of the form \cref{eq:general_energy} depends highly on the choice of inner product. 
In addition, the group $\LieGroup$ must be known to completely deduce an explicit expression for the mapping $\AdjLieBracket_{\acurve}^*$.

We next consider two algorithms for computing a minimizer, the shooting method and the path method.

\subsubsection{Shooting method}
The shooting method arises from the Euler--Arnold equations \eqref{eq:eulerarnold}. 
The idea is to start with an initial guess of the value of the momentum, and then to integrate \cref{eq:eulerarnold} to obtain the full curve $m\colon [0,1] \to \LieAlgebra^*$. 
From this curve, one then obtains the curve $\acurve\colon [0,1] \to \LieAlgebra$ that can be used to compute the element in $\LieGroup$ acting on the template. 

Note that the regularization energy is conserved along the dynamics of the Euler--Arnold equations \eqref{eq:eulerarnold}. 
Indeed, it holds that 
\begin{equation*}
    \dv{}{t} \bigl\langle \momentum(t),\acurve(t) \bigr\rangle_{\LieAlgebra^*,\LieAlgebra} 
    = 2\bigl\langle \dot \momentum(t),\acurve(t)\bigr\rangle_{\LieAlgebra^*,\LieAlgebra}  
    = 2\Bigl\langle \AdjLieBracket_{\acurve}^*\bigl(\momentum(t)\bigr),\acurve(t)\Bigr\rangle_{\LieAlgebra^*,\LieAlgebra}\!\! 
    =2 \Bigl\langle \momentum(t),\operatorname{ad}_{\acurve}\bigl(\acurve(t)\bigr)\Bigr\rangle_{\LieAlgebra^*,\LieAlgebra}\!\! = 0.
\end{equation*}
Therefore, one only considers the regularization term at the initial momentum, 
\begin{equation*}\tilde{\RegFunc}\bigl(\momentum(0)\bigr)=\bigl\langle \momentum(0),\acurve(0)\bigr\rangle_{\LieAlgebra^*,\LieAlgebra}
\end{equation*}
and the optimal deformation is obtained by optimizing only over possible \emph{initial} momenta.

The above approach is called the shooting method, since one guesses an initial momentum that is then ``shot away'' to obtain the deformation of the template. 
One then evaluates the quality of the match by computing the matching energy and then ``re-aims'' to improve the matching, i.e., update the momentum by taking a step with a scheme given by an optimization method, like gradient descent, \ac{BFGS}-type method or a conjugate gradient type of method.
The shooting method is summarized in \cref{alg:shoot}.
\par\medskip

\begin{algorithm}
\caption{Shooting method for shape matching}
\label{alg:shoot}
\begin{itemize}
    \item \textbf{Initialize:} Suitable guess for initial momentum $\momentum_0^{[0]}$, optimization routine, numerical integration parameters
    \item \textbf{While} $\momentum_0$ changes OR fixed number of steps \textbf{do}:
    \begin{itemize}
        \item Numerically integrate the Euler--Arnold equation \eqref{eq:eulerarnold} to obtain $\acurve\colon[0,1]\to \LieAlgebra$
        \item Numerically integrate \cref{eq:flow2} to obtain an approximation of $\gcurve^{\acurve}(1)$
        \item Evaluate $\LossFunc_{\ShapeSpace}\Bigl( \GroupAction\bigl(\gcurve^{\acurve}(1),\template\bigr),\target\Bigr) + \lambda\tilde{\RegFunc}\bigl(\momentum_0(0)\bigr)$
        \item Compute $\momentum^{[j+1]}$ by taking a step with the optimization routine
        \item \textbf{If needed}:
        \begin{itemize}
            \item Compute $\nabla_{\acurve_0^{[j]}}\LossFunc_{\ShapeSpace}\Bigl( \GroupAction\bigl(\gcurve^{\acurve}(1),\template\bigr),\target\Bigr)$ using \cref{th:gradient_general}
        \end{itemize}
    \end{itemize}
\end{itemize}
\end{algorithm}

\begin{remark}
    For the protein backbone structure, detailed in \cref{sec:cryoemmath} below, the Lie group and Lie algebra has a direct product structure, $G=\SO(3)^{\numPep}$ and $\mathfrak{g} = \SOLieAlgebra(3)^{\numPep}$.
    Furthermore, the inertia operator $\mathbb{I}$ is diagonal with respect to this structure.
    Thereby, the numerical integration steps in \cref{alg:shoot} are naturally parallelized.
\end{remark}

\subsubsection{Path method}
The path method is an alternative to the shooting method that avoids solving the Euler--Arnold equation in \cref{eq:eulerarnold}.

Given a partition of the time interval, $0=t_0 <t_1 <\ldots <t_n =1$, we optimize the values of the momentum at those points, $\momentum_{t_0},\ldots,\momentum_{t_n}$. 
The method is summarized in \cref{alg:desc}
\par\medskip
\begin{algorithm}
\caption{Path method for shape matching}
\label{alg:desc}
\begin{itemize}
    \item \textbf{Initialize:} Time points $t_0, \dots, t_n$, suitable guesses for momentum path $\momentum_0^{[0]}, \momentum_1^{[0]}, \dots, \momentum_n^{[0]}$, optimization routine, numerical integration parameters
    \item \textbf{While} $\momentum_0, \dots, \momentum_n$ changes \textbf{OR} fixed number of steps \textbf{do}:
    \begin{itemize}
        \item Numerically integrate \cref{eq:flow2} using $\momentum_0, \dots, \momentum_n$ to approximate $\gcurve^{\acurve}(1)$
        \item Evaluate $\LossFunc_{\ShapeSpace}\Bigl( \GroupAction\bigl(\gcurve^{\acurve}(1), \template\bigr), \target \Bigr) + \lambda \RegFunc(\acurve)$
        \item \textbf{For} $k = 0, \dots, m$ \textbf{do}:
        \begin{itemize}
            \item Set $\momentum_{t_k}^{[j+1]}$ by taking a step with the optimization routine
            \item \textbf{If needed}:
            \begin{itemize}
                \item Compute $\nabla_{\acurve_{t_k}^{[j]}}\EnergyFunc$ with \cref{th:gradient_general}
            \end{itemize}
        \end{itemize}
    \end{itemize}
\end{itemize}
\end{algorithm}

\subsubsection{Gradient of the objective}
\Cref{alg:shoot,alg:desc} requires computing the gradient of $\EnergyFunc$ in \cref{eq:general_energy} with respect to the curve $\acurve \colon [0,1] \to \LieAlgebra$. 

In the case when $\LieGroup$ is a matrix Lie group, the gradient is straightforward to compute compared to the case of the diffeomorphism group, since there are no spatial derivatives to account for.  
Next, for matrix Lie groups, the group operation is just given by matrix multiplication.
Furthermore, for fixed $\gelem \in \LieGroup$, the tangent left translation $\dd L_{\gelem}\colon \LieAlgebra \to T_{\gelem}(\LieGroup)$ and tangent right translation $\dd R_{\gelem}\colon \LieAlgebra \to T_{\gelem}(\LieGroup)$ are also given by matrix multiplications:
\[
    \dd L_{\gelem}(\aelem) = \gelem \aelem 
    \quad\text{and}\quad
    \dd R_{\gelem}(\aelem) = \aelem \gelem
\quad\text{for a matrix $\aelem \in \LieAlgebra$.}
\]

\Cref{th:gradient_general} explicitly computes the gradient of the energy $\EnergyFunc \colon \ContSpace^{\infty}\bigl([0,1],\LieAlgebra\bigr) \to \Real$ in \cref{eq:general_energy} for matching $\template \in \ShapeSpace$ against $\target \in \DataSpace$ in this specific setting. 
The statement holds for general Lie groups $\LieGroup$ acting on shape spaces $\ShapeSpace$ and a simplified proof for the case when $\LieGroup$ is a matrix Lie group is given in \cref{app:A}.
\begin{theorem}\label{th:gradient_general}
Let the space  $\ShapeSpace$ of deformable objects (shape space) be a manifold and the space $\DataSpace$ of data (data space) is a Hilbert space.
Next, let $\template \in \ShapeSpace$ and $\target \in \DataSpace$ and assume that elements in $\ShapeSpace$ are deformed by acting with a Lie group $\LieGroup$ on $\ShapeSpace$ through an action $\GroupAction \colon \LieGroup \times \ShapeSpace \to \ShapeSpace$.
Furthermore, define $\EnergyFunc \colon \ContSpace^{\infty}\bigl([0,1],\LieAlgebra\bigr) \to \Real$ following \cref{eq:general_energy}, as  
\begin{equation}\label{eq:EnergyIndirectMatching}
  \EnergyFunc(\acurve) := \lambda \RegFunc(\acurve) +
  \LossFunc_{\DataSpace}\Bigl( (\ForwardOp \circ \GroupAction)\bigl(\gcurve^{\acurve}(1),\template\bigr),\target\Bigr) 
\quad\text{for a curve $\acurve \colon [0,1] \to \LieAlgebra$,}      
\end{equation}
where $\gcurve^{\acurve} \colon [0,1] \to \LieGroup$ is given by solving \cref{eq:flow2}, $\RegFunc \colon \ContSpace^{\infty}\bigl([0,1],\LieAlgebra\bigr) \to \Real$ is as in \cref{eq:lagrangian}, and $\ForwardOp \colon \ShapeSpace \to \DataSpace$ is assumed to be differentiable.
When $\LieGroup$ is a matrix Lie group and $\LossFunc_{\DataSpace} \colon \DataSpace \times \DataSpace \to \Real$ (data fidelity) above is given by the squared $\DataSpace$-norm as 
\[
  \LossFunc_{\DataSpace}(\data,\dataother) := 
    \frac{1}{2}\| \data - \dataother \|_{\DataSpace}^2
  \quad\text{for $\data,\dataother \in \DataSpace$,}
\]
then the gradient of $\EnergyFunc$ at $\acurve \colon [0,1] \to \LieAlgebra$ is given by 
\begin{equation}\label{eq:general_gradient}
  \nabla \EnergyFunc(\acurve)(t)
    =\lambda \acurve(t) + 
    \bigl(\AdjRep_{(\gcurve^{\acurve})^{-1}}\bigr)^*
      \biggl(\Bigl(
        \dd\bigl(\ForwardOp \circ \GroupAction(\Cdot,\template) \circ L_{\gcurve^{\acurve}(1)}\bigr)(e)\Bigr)^*\bigl(\residual_{\template,\target}\bigr)
      \biggr),
\end{equation}
where $\residual_{\template,\target} \in \DataSpace$ is a residual term that is defined as
\begin{equation}\label{eq:Residual}
\residual_{\template,\target} := \Bigl( 
  (\ForwardOp \circ \GroupAction)
  \bigl(\gcurve^{\acurve}(1),\template\bigr)
  - \target 
\Bigr).
\end{equation}
\end{theorem}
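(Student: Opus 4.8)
The plan is to split the energy as $\EnergyFunc(\acurve)=\lambda\RegFunc(\acurve)+\LossFunc_{\DataSpace}\bigl((\ForwardOp\circ\GroupAction)(\gcurve^{\acurve}(1),\template),\target\bigr)$ and compute the two first variations separately, then read off the $\LpSpace^2\bigl([0,1],\LieAlgebra\bigr)$-gradient. The regularizer $\RegFunc(\acurve)=\int_0^1 \inpro[e]{\acurve(t)}{\acurve(t)}\dint t$ is quadratic, so its first variation in a perturbation direction $\dacurve\colon[0,1]\to\LieAlgebra$ is $2\int_0^1 \inpro[e]{\acurve(t)}{\dacurve(t)}\dint t$, contributing the pointwise term $\lambda\acurve(t)$ to the gradient (up to the customary factor absorbed into $\lambda$). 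All the work therefore lies in the data term, whose first variation is, by the chain rule and the fact that differentiating the squared norm pairs against the residual $\residual_{\template,\target}$ of \cref{eq:Residual},
\begin{equation*}
  \inpro[\DataSpace]{\residual_{\template,\target}}{\dd(\ForwardOp\circ\GroupAction(\Cdot,\template))\bigl(\gcurve^{\acurve}(1)\bigr)\bigl[\delta\gcurve(1)\bigr]},
\end{equation*}
where $\delta\gcurve(1)$ is the sensitivity of the endpoint $\gcurve^{\acurve}(1)$ to the control $\acurve$ in the direction $\dacurve$. The problem thus reduces to computing $\delta\gcurve(1)$.

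To obtain it I would linearize the flow \cref{eq:flow2}, which in the matrix setting reads $\dot\gcurve=\acurve\gcurve$ with $\gcurve(0)=e$. Perturbing $\acurve\mapsto\acurve+\epsilon\dacurve$ and differentiating at $\epsilon=0$ gives the linear inhomogeneous variational equation
\begin{equation*}
  \dv{}{t}\delta\gcurve(t) = \dacurve(t)\,\gcurve^{\acurve}(t) + \acurve(t)\,\delta\gcurve(t), \qquad \delta\gcurve(0)=0 .
\end{equation*}
The crux is solving this in closed form. The key observation is that $t\mapsto\AdjRep_{\gcurve^{\acurve}(t)}$ satisfies $\dv{}{t}\AdjRep_{\gcurve^{\acurve}(t)}=\AdjLieBracket_{\acurve(t)}\circ\AdjRep_{\gcurve^{\acurve}(t)}$ for the right-invariant flow, so the state-transition operator of the homogeneous part is $\AdjRep_{\gcurve^{\acurve}(t)(\gcurve^{\acurve}(s))^{-1}}$. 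Variation of constants (verifiable directly by differentiating under the integral) then gives
\begin{equation*}
  \delta\gcurve(1) = \dd L_{\gcurve^{\acurve}(1)}\!\left( \int_0^1 \AdjRep_{(\gcurve^{\acurve}(s))^{-1}}\bigl(\dacurve(s)\bigr)\dint s \right),
\end{equation*}
which is exactly where the left translation $L_{\gcurve^{\acurve}(1)}$ and the conjugations $\AdjRep_{(\gcurve^{\acurve}(s))^{-1}}$ of the final formula originate.

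With this in hand I would substitute into the first variation and move the linear maps onto the residual one at a time. Using the chain-rule identity $\dd(\ForwardOp\circ\GroupAction(\Cdot,\template))\bigl(\gcurve^{\acurve}(1)\bigr)\circ\dd L_{\gcurve^{\acurve}(1)}(e)=\dd\bigl(\ForwardOp\circ\GroupAction(\Cdot,\template)\circ L_{\gcurve^{\acurve}(1)}\bigr)(e)$ (valid since $L_{\gcurve^{\acurve}(1)}(e)=\gcurve^{\acurve}(1)$), the variation of the data term becomes $\int_0^1 \inpro[\DataSpace]{\residual_{\template,\target}}{\dd\bigl(\ForwardOp\circ\GroupAction(\Cdot,\template)\circ L_{\gcurve^{\acurve}(1)}\bigr)(e)\bigl[\AdjRep_{(\gcurve^{\acurve}(s))^{-1}}(\dacurve(s))\bigr]}\dint s$. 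Transferring the operator $\dd\bigl(\ForwardOp\circ\GroupAction(\Cdot,\template)\circ L_{\gcurve^{\acurve}(1)}\bigr)(e)$ by its adjoint turns the $\DataSpace$-inner product into a pairing on $\LieAlgebra$, and transferring $\AdjRep_{(\gcurve^{\acurve}(s))^{-1}}$ by its adjoint $\bigl(\AdjRep_{(\gcurve^{\acurve}(s))^{-1}}\bigr)^*$ pulls it outside. What remains is the $\LpSpace^2$-pairing of $\dacurve$ against the claimed data-fidelity gradient; adding the regularization term $\lambda\acurve(t)$ yields \cref{eq:general_gradient}.

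The main obstacle is the middle step: deriving and solving the variational equation for $\delta\gcurve(1)$ and recognizing its fundamental solution as the adjoint representation. In the matrix setting this is a direct computation because $L$, $R$ and $\AdjRep$ are explicit matrix products (the simplification over the diffeomorphism-group case), but one must carefully track left versus right translation so the trivialization lands on $\dd L_{\gcurve^{\acurve}(1)}$ rather than $\dd R_{\gcurve^{\acurve}(1)}$. Once the sensitivity formula is established, the adjoint bookkeeping of the last paragraph and the differentiation of the quadratic regularizer are routine.
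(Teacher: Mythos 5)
Your proposal is correct and takes essentially the same route as the paper: the closed-form endpoint sensitivity $\delta_{\dacurve}\gcurve^{\acurve}(1) = \dd L_{\gcurve^{\acurve}(1)}\bigl(\int_0^1 \AdjRep_{(\gcurve^{\acurve}(s))^{-1}}\bigl(\dacurve(s)\bigr)\dint s\bigr)$ that you derive by linearizing the flow and applying variation of constants is precisely \cref{th:variation}, which the paper likewise establishes by verifying the candidate solution through differentiation, and your subsequent adjoint bookkeeping (chain rule through $L_{\gcurve^{\acurve}(1)}$, then transferring $\varpi$ and $\AdjRep_{(\gcurve^{\acurve}(s))^{-1}}$ onto the residual and exchanging integral and inner product) matches the paper's proof of \cref{th:gradient_general} step for step. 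Your explicit remark absorbing the factor $2$ from the quadratic regularizer into $\lambda$ is, if anything, slightly more careful than the paper's own computation.
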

\begin{remark}
    To interpret the terms/factors in \cref{eq:general_gradient}, note that the first term is the gradient of the regularization term.
    The second term is the gradient of the data fidelity term and is a composition of several expressions. 
    The residual term $\residual_{\template,\target}$ is the difference between the deformed template transformed with $\ForwardOp$ into data space and the target, see \cref{eq:Residual}. 
    To the residual, we apply several adjoint mappings. 
    These appear in various ways. 
    $\ForwardOp \circ \GroupAction(\Cdot,\template) $ arises from the application of the forward model and the deformation of the template. 
    Further, $L_{\gcurve^{\acurve}(1)}$ and $\AdjRep_{(\gcurve^{\acurve})^{-1}}$ appears when computing the first order variation of $\gcurve^{\acurve}(t)$, which is a crucial step in the proof of \cref{th:gradient_general}. 
\end{remark}

While \cref{th:gradient_general} provides all the ingredients for computing the gradient, an explicit expression is only obtainable when one specifies the manifold $\ShapeSpace$ of deformable objects. One also needs to specify a Lie group $\LieGroup$ whose action $\GroupAction \colon \LieGroup \times \ShapeSpace \to \ShapeSpace$ on $\ShapeSpace$ represents deformations. The metric on $\LieGroup$, which is specified by $\mathbb{I}$, is next used to penalize ``unnatural'' deformations. Finally, we have the set $\DataSpace$ (data space) whose elements represent data, which are noisy indirect observations of deformable objects, and the forward model $\ForwardOp \colon \ShapeSpace \to \DataSpace$ models the data in $\DataSpace$ that is generated by a deformable object in $\ShapeSpace$.

\section{Joint reconstruction and model building in \ac{Cryo-SPA}}
\label{sec:cryoemmath}
As shown in \cref{sec:Joint3DRecoModel}, one can view the problem of recovering the backbone conformation of a protein from \ac{Cryo-SPA} data as indirect matching.
This matching problem is, in turn, a special case of shape matching that is outlined in \cref{sec:lddmm} with specific choices of shape space, group action, and data fidelity functional.
Before specifying these, we start with a brief overview of how protein structure is described.

\subsection{Protein structures}
Proteins are large biomolecules that are made up of polypeptides, each consisting of many (typically 100--300) residues.
In a polypeptide, there is a backbone chain consisting of repeating nitrogen-carbon-carbon units.
Residues are small molecules (about 0.5~nm in `diameter') that consist of a backbone part, formed by nine atoms \ch{H2N–CH-C_{$\alpha$}-O2H}, and a side-chain (R-group), which is an amino acid that is covalently bonded to the central carbon atom (denoted by \ch{C_{$\alpha$}}) in the backbone. 
The backbone part of a residue is the same for all proteins, whereas the side-chains vary. For most (not all) proteins, there are 20 possible amino acids that can serve as side-chains, i.e., there are 20 possible residues.
\Cref{fig:protstruct} gives a schematic representation of the structure, together with the first and last polypeptide unit, known as the \ch{N}-terminus and \ch{C}-terminus, respectively. 
In \cref{fig:protstruct_3D}, we further see the orientations of the side-chain and substituents. 
\begin{figure}[hbt]
    \centering
    \begin{subfigure}[t]{0.9\textwidth}
    \centering
    \includegraphics[trim={9.5cm 8.5cm 8.5cm 8.5cm},clip, width=\textwidth]{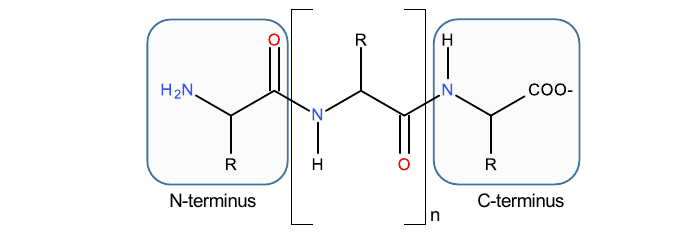}
    \caption{Schematic of a polypeptide backbone consisting of $n$ repeating units of \ch{N-C_{$\alpha$}-C(carbonyl)}.
    In general, the nitrogen has a hydrogen substituent. 
    The first carbon (not shown) is known as the \emph{$\alpha$-carbon}, and it has a hydrogen substituent and a substituent called the side-chain (residue) denoted by the letter $R$.
    The second carbon (carbonyl carbon) is double bonded to an oxygen atom.}
    \label{fig:protstruct}
    \end{subfigure}
    ~
    \begin{subfigure}[t]{0.9\textwidth}
    \centering
    \includegraphics[trim={9.5cm 6.0cm 9.5cm 12.0cm},clip, width=\textwidth]{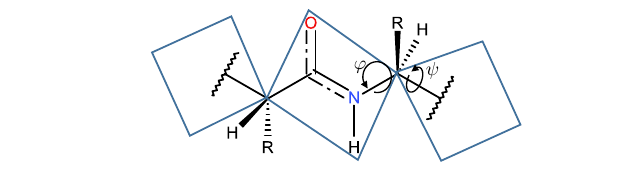}
    \caption{Schematic of the 3D orientation of the peptide units. Each section of the polypeptide between two \ch{C_{$\alpha$}} each are co-planar, whereas the hydrogen atoms (H) and side-chains (R) extend forward or backward out from the plane.}
    \label{fig:protstruct_3D}
    \end{subfigure}
    \label{fig:structure}
    \caption{The ``anatomy'' of the backbone of a protein.}
\end{figure}

There are several ways to describe a protein.
First is its \emph{primary structure}, which is an ordered list of the residues that make up the polymer chain(s). 
Mathematically, this is a labelled linear graph (or string) that has the peptide bonds as edges and the types of residue side-chains (there are 20 possibilities) as labels.
The primary structure can nowadays be easily determined from the gene sequence that codes for the protein.

Next are \emph{secondary structures} that focus on describing the basic folding patterns that the backbone typically displays. 
These patterns are divided into two groups: $\alpha$-helices and $\beta$-sheets.
An $\alpha$-helix refers to a coiling of the backbone that is held together by \ch{H}-bonding between the backbone residues.
The resulting helical structure typically contains about ten amino acids (three turns) but some may have over forty residues.
A $\beta$-sheet is the case when the backbone is arranged along several parallel or antiparallel (with respect to the $\ch{H2N} \to \ldots \to \ch{CO2H}$ direction of the backbone chain) strands that form a pleated (often twisted) sheet-like structure. 
These strands are typically 5–10 amino acids long, and they are connected laterally by \ch{H}-bonds. 
The geometric arrangements for the portions of the backbone that make up these secondary structures can be mathematically characterized \cite{Gromov:2011aa}.

Finally, we have the \emph{tertiary structure} that is the full description of the 3D arrangement of atoms and (covalent) bonds that make up the protein. 
It is determined by the spatial folding of the backbone (backbone conformation) along with the conformations of the peptide-units (side-chains).
This includes specifying the flexible loops that join the fairly rigid ``rod''-like structures formed by the aforementioned $\alpha$-helices and ``plates'' formed by $\beta$-sheets
\footnote{One still lacks a comprehensive formal mathematical language for describing flexible loops \cite{Gromov:2011aa}.}.

As already noted, the backbone is made up in the same way for all natural proteins, but the types and locations of the side-chains differ across proteins.
The identities of these side-chains influence the secondary and tertiary structures \cite[Chapter~3]{Alberts2014}.
In fact, it is commonly accepted by scientists that all information required to specify the tertiary structure of a protein is contained in its primary structure.
To infer/reconstruct the tertiary structure from the primary structure is known as (ab initio) folding.
AlphaFold~2 \cite{Jumper2021} used deep learning to ``address'' this challenge for single-chain proteins. 
This approach was awarded the 2024 Nobel Prize in Chemistry, and it was later extended to multichain proteins in \cite{Evans:2022aa}. 
A further development of this type of structure prediction is AlphaFold~3 \cite{Abramson:2024aa}, which uses deep diffusion-based models to predict structures of not only proteins but also biomolecular complexes that include proteins, nucleic acids, small molecules, ions and modified residues.
However, recovering intrinsic dynamics of isolated (single-chain) proteins is still an open problem.

\subsection{Shape space for protein structures}
Constructing a shape space for a class of biomolecules corresponds to a mathematical formalization of their relevant structural patterns.
Focus here is on the backbone conformation of proteins that consist of a single polypeptide\footnote{Most proteins consist of a single polypeptide, but  some can consist of several polypeptides, like hemoglobin that is composed of four polypeptides.}.

An important requirement on the shape space in our setting is that it should be possible to map an element in the shape space to a 3D map (electrostatic potential) generated by the protein structure encoded by the shape space element.
This indicates that it should be enough to represent the structure as a labelled point cloud, where the labelling is given by the atomic nuclei.
However, when handling protein dynamics, it is highly desirable to exclude deformations that are not biophysical.
This becomes easier if the shape space encodes more of the chemical information, like information about some of the chemical bonds. 
In fact, the dynamics of biomolecules, and proteins in particular, tends to preserve the covalent bonds.
Similarly, local hydrogen bonds that make up secondary structures are also often preserved.  
Hence, the (intrinsic) dynamics of isolated biomolecules is to a large extent governed by the formation and destruction of distant hydrogen bonds.

The primary structure for a protein that consists of $\numPep$ residues can be represented as an array, $(\residue_1, \ldots, \residue_{\numPep})$ where each residue $\residue_i$ can have 20 possible labels (determined by the 20 possible amino acids that a side-chain can be).
Next, the four atoms in the section of the backbone between two \ch{C_{$\alpha$}} atoms are co-planar (see \cref{fig:protstruct_3D}), so the conformation of the backbone can be described with two dihedral angles for each residue.
This means the conformation of the backbone can be described as an array in $\SO(3)^{\numPep}$.
In addition, one needs four dihedral angles to encode the conformation of a side-chain in a residue \cite{Dunbrack:1993aa,Bhuyan:2011aa,Towse:2016aa}, so side-chain conformations can be described as an array in $(\SO(3)^2)^{\numPep}$.
Hence, an all-atom-model applicable to most proteins with primary structure  $(\residue_1, \ldots, \residue_{\numPep})$ is given as an element in $(\SO(3) \times \SO(3)^2)^{\numPep}$.

\subsubsection{Protein backbone conformations}
\label{sec:protein}
The 3D spatial arrangement (structure) of the backbone for a fixed protein is determined by the 3D positions of the \ch{C_{$\alpha$}} atoms in its backbone.
If $\AtmModSpace$ is the set of possible 3D arrangements of the backbone, then its elements are merely arrays of the form $\bbpos=(\bbpos_1,\dots,\bbpos_{\numPep})\in\Real^{3\times \numPep}$ where $\numPep$ is the number of residues and each $\bbpos_i \in \Real^3$ represents the position of the $i$:th \ch{C_{$\alpha$}} atom.
From such a representation, one can compute the electrostatic potential generated by the \ch{C_{$\alpha$}} atoms in the backbone.
With an additional small effort, can also include the contributions to the electrostatic potential from the entire backbone (but disregarding the side-chains).
This is represented by a mapping $\AtmToMapOp \colon \AtmModSpace \to \MapSpace$ and it may be a sufficiently good approximation of the 3D map for the entire protein.
The latter is important when one needs to simulate 2D \ac{TEM} images generated by the protein structure.

Elements in $\AtmModSpace$ do not explicitly encode the covalent bonds in the protein backbone.
These are important since biophysically feasible deformations of the backbone tend to preserve them.
One can implicitly account for these covalent bonds without introducing a computationally demanding graph-type data structure to handle.
The idea is to consider inter-atomic distances between the \ch{C_{$\alpha$}} atoms in the backbone.
These distances are given as
\[
    \delta_{j} := \| \bbpos_{j+1}-\bbpos_j \|\in \Real_+
    \quad\text{for all $j=1,\dots,\numPep-1$,}
\]
and allowable backbone deformations are now those that ensure $\delta_{j}=\delta >0$ for all $j$.
This assumption that interatomic distances are preserved opens up for representing the backbone structure via the relative positions of the \ch{C_{$\alpha$}} atoms.
To define the shape space, first consider all relative positions of the  \ch{C_{$\alpha$}} atoms in the backbone, which is an array in $\Real^{3\times (\numPep-1)}$:
\[
  (\bbpos_2-\bbpos_1,\dots,\bbpos_{\numPep}-\bbpos_{\numPep-1})
  \in\Real^{3\times (\numPep-1)}
  \quad\text{with}\quad
  (\bbpos_1,\bbpos_2, \dots,\bbpos_{\numPep})
  \in\AtmModSpace.
\]
This could be our shape space, but we also seek a bijection between the shape space and $\AtmModSpace$.
Therefore, we supplement the above relative positions with the absolute position of the first \ch{C_{$\alpha$}} atom, i.e., our shape space $\ShapeSpace$ consists of $\numPep$-arrays of the form 
\[
  \relbbpos=(\bbpos_1,\bbpos_2-\bbpos_1,\dots,\bbpos_{\numPep}-\bbpos_{\numPep-1})
  \in\Real^{3\times \numPep}
  \quad\text{with}\quad
  (\bbpos_1,\bbpos_2, \dots,\bbpos_{\numPep})
  \in\AtmModSpace.
\]
As indicated, the two different representations of the backbone structure in $\AtmModSpace$ and $\ShapeSpace$ are related via the bijective map $\opStyle{M} \colon \ShapeSpace \to \AtmModSpace$ that is defined as
\[
    \opStyle{M}(\relbbpos):=\Bigl(\relbbpos_1, \relbbpos_1+\relbbpos_2, \relbbpos_1+\relbbpos_2+\relbbpos_3,\dots,\sum_{j=1}^\numPep \relbbpos_j \Bigr)
\]
with inverse $\opStyle{M}^{-1} \colon \AtmModSpace \to \ShapeSpace$  given by
\[
    \opStyle{M}^{-1}(\bbpos)=(\bbpos_1, \bbpos_2-\bbpos_1,\dots,\bbpos_{\numPep}-\bbpos_{\numPep-1}).
\]

We next define the Lie group action that deforms the backbone structure of a protein.
Let $\relbbpos, \relbbposother \in \ShapeSpace$ be two different backbone structures of the protein. 
Since these have the same interatomic distances, there exist rotation matrices $\rotmat_1,\dots, \rotmat_{\numPep}\in\SO(3)$ such that $\relbbposother_i=\rotmat_i \relbbpos_i$. 
The deformation of the backbone of the protein can then be represented by acting with the Lie group $\LieGroup := \bigl(\SO(3)\bigr)^\numPep$ on $\ShapeSpace$ via component-wise matrix-vector multiplication:
\[  
\GroupAction \colon \LieGroup \times \ShapeSpace \to \ShapeSpace
\quad\text{where}\quad
\GroupAction(\gelem,\relbbpos) := 
\bigl( \rotmat_1\relbbpos_1, \ldots, \rotmat_{\numPep}\relbbpos_{\numPep} \bigr)
\quad\text{with $\gelem=(\rotmat_1,\ldots,\rotmat_{\numPep} ) \in \LieGroup$.}
\]
This means acting with $\bigl(\SO(3)\bigr)^\numPep$ on the relative atom positions for the \ch{C_{$\alpha$}} atoms in the backbone, which in particular preserves interatomic distances. 
Note also that one cannot independently move a single \ch{C_{$\alpha$}} atom in the backbone, e.g., moving the first one will affect the entire backbone.

\begin{remark}
    An all-atom model for the protein is given by its primary structure, an element in $\AtmModSpace$ (backbone conformation) or $\ShapeSpace$, and conformations of the $\numPep$ residues.
\end{remark}

\begin{remark}
There is a subtle difference between the notion of conformation and structure.
The conformation is the spatial 3D arrangement of the protein up to the pose of the protein, whereas structure refers to the actual position in space (which also includes its pose). 
For biological interpretation, it is enough to work with conformation, but for simulating \ac{TEM} images one also needs the structure since the simulation depends on the pose.
\end{remark}

\subsubsection{The \acs{Cryo-EM} forward operator}
\label{sec:adapt}
The primary structure of the protein is known, and it is used to define the shape space $\ShapeSpace$ of possible backbone structures for the protein. 
These are only indirectly observable by \ac{TEM} imaging.
The \emph{forward operator} models how a deformable object in $\ShapeSpace$ (see \cref{sec:protein}) gives rise to an observable 2D \ac{TEM} image.

2D \ac{TEM} images can be seen as digitized functions in $\DataSpace := \LpSpace^2(\Real^2)$.
We now describe the forward operator used for generating an  element $\data_i \in \DataSpace$ from a backbone structure $\relbbpos_i \in \ShapeSpace$ for $i=1,\ldots, \numImgs$.
The idea is to first map $\relbbpos_i$ to the corresponding 3D point cloud representation in $\AtmModSpace$ via $\opStyle{M} \colon \ShapeSpace\to \AtmModSpace$. 
Next, we create a (approximate) 3D map that serves as input for an operator that models \ac{TEM} image formation.   
One way to do this is to replace each atom in the backbone with a 3D Gaussian, a step that is mathematically formalized by the operator  $\AtmToMapOp \colon \AtmModSpace\to \MapSpace$ where $\MapSpace = \LpSpace^2(\Real^3)$ is the vector space of 3D maps, i.e., functions on $\Real^3$ that can be the electrostatic potential for a protein embedded in vitrified aqueous buffer.
Finally, we generate a 2D \ac{TEM} image from the (approximate) 3D map by an operator $\opStyle{P}\colon \MapSpace \to \DataSpace$ that models \ac{TEM} image formation. 
The simplest such model is given by the parallel beam ray transform that disregards the \ac{TEM} optics and detector response.
The forward operator $\ForwardOp \colon \ShapeSpace \to \DataSpace$ is now given as (see also \cref{fig:forward_model})
\begin{equation}\label{eq:TEMtotFwdOp}
\ForwardOp :=\opStyle{P}\circ\AtmToMapOp\circ\opStyle{M}.
\end{equation}
When $\opStyle{P}$ is the parallel beam ray transform, then one can evaluate \cref{eq:TEMtotFwdOp} in a computationally feasible manner as follows: Perform a geometric projection of the input 3D point cloud along the \ac{TEM} optical axis onto the 2D  \ac{TEM} detector plane. Then, apply the 2D analog of $\AtmToMapOp$ to the 2D point cloud.
\revision{The operator $\opStyle{P}$ in \cref{eq:TEMtotFwdOp} models \ac{TEM} image formation. It is clearly a simplification that omits several aspects of \ac{TEM} imaging, like the optics \ac{CTF} \cite{Oktem:2015aa}. 
If necessary, one can extend $\opStyle{P}$ to also incorporate these.}

\begin{figure}[hbt!]
    \centering
    \includegraphics[trim={3.5cm 1.7cm 2.1cm 3.3cm},clip, width=0.6\textwidth]{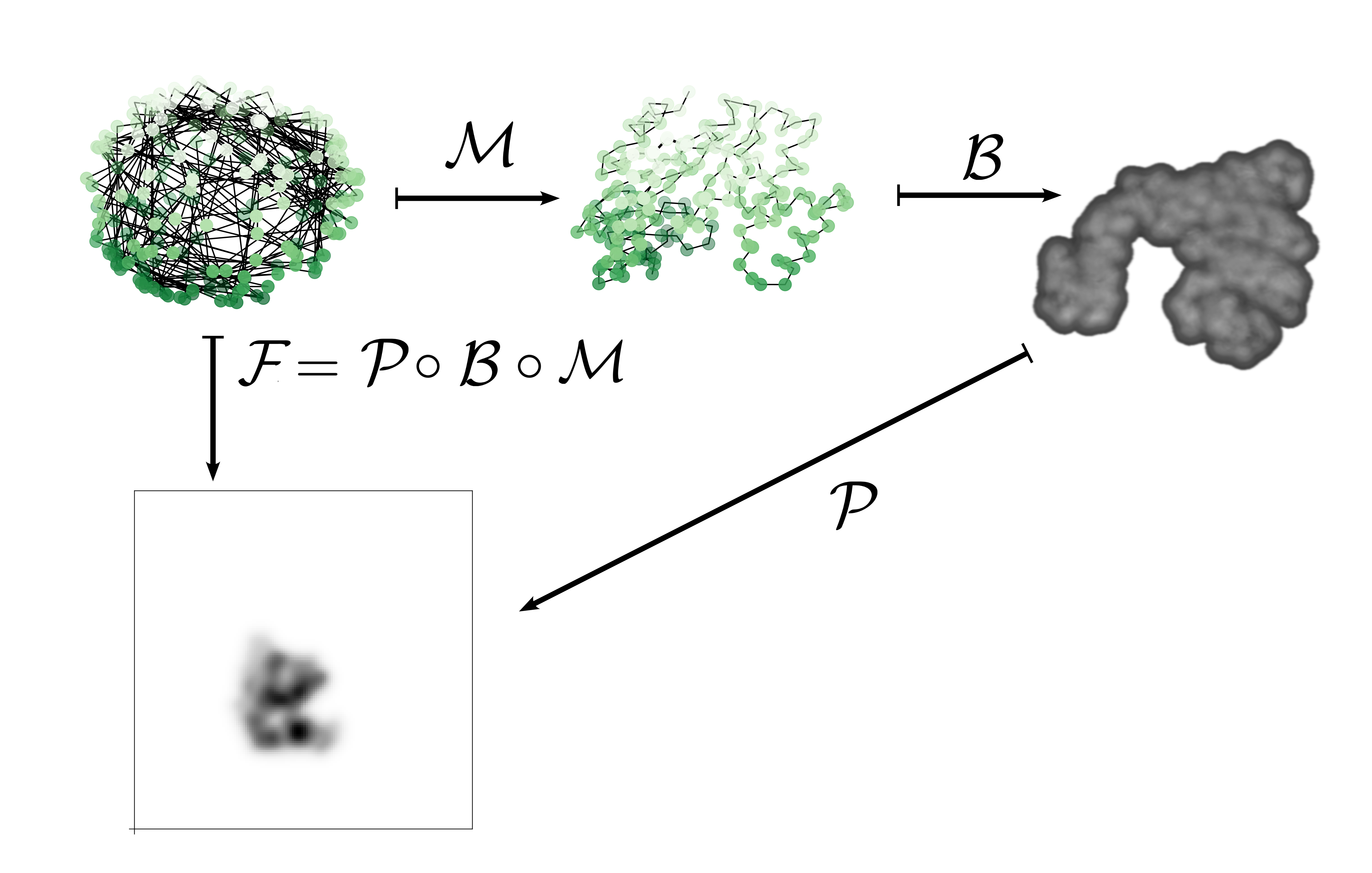}
    \caption{The forward model is built up from three intermediate steps. First is to use $\opStyle{M}$ to map $\ShapeSpace$ (relative positions) to corresponding absolute atomic positions, i.e., elements in $\AtmModSpace$.
    Next, we use $\AtmToMapOp$ to generate a (approximate) 3D map in $\MapSpace$, which is then mapped by $\opStyle{P}$ to a 2D \ac{TEM} image in $\DataSpace$.}
    \label{fig:forward_model}
\end{figure}

\subsection{Shape matching of protein backbones}
\label{sec:protein_specifics}
We here describe how the general theory for shape matching in \cref{sec:lddmm} can be adapted to the setting in \cref{sec:protein,sec:adapt}, i.e., to register a template backbone in $\ShapeSpace := \Real^{3 \times \numPep}$ for a protein by a group action in $\LieGroup := (\SO(3))^\numPep$ against a target backbone that is indirectly observed as a 2D \ac{TEM} image in $\DataSpace := \LpSpace^2(\Real^2)$.
\Cref{sec:protein} offers a motivation for these choices, and solving this indirect registration task will be based on minimizing the energy functional in \cref{eq:EnergyIndirectMatching}. 
Hence, a key topic is to express the gradient in \cref{eq:general_gradient} in this specific setting.

Note first that the Lie algebra $\LieAlgebra$ of the Lie group $\LieGroup$ consists of the direct sum of $\numPep$ copies of $\SOLieAlgebra(3)$, which is the Lie algebra of skew-symmetric matrices. 
The mapping $\GroupAction \colon \LieGroup \times \ShapeSpace \to \ShapeSpace$ that  represents the action of the group $\LieGroup$ on $\ShapeSpace$ is here given by component-wise matrix-vector multiplication.
Further, we set $\mathbb{I} = (\mathbb{I}_1,\ldots \mathbb{I}_{\numPep})$, where each $\mathbb{I}_i$ maps a $\aelem_i \in \SOLieAlgebra(3)$ to its momentum, and let $\mathbb{I}(\aelem) =  \bigl(\mathbb{I}_1(\aelem_1),\ldots \mathbb  I_{\numPep}(\aelem_{\numPep})\bigr)$.
To construct the inner product on the Lie algebra $\LieAlgebra :=\bigl(\SOLieAlgebra(3)\bigr)^{\numPep}$, we first consider the Frobenius inner product on $\SOLieAlgebra(3)$ given by $\inpro[F]{\aelemother}{\aelem} :=-\traceop(\aelemother \aelem)$. 
Any  invertible self adjoint positive map relative to the Frobenius inner product induces an inner product on $\SOLieAlgebra(3)$: $\inpro[]{\aelemother}{\aelem}=\inpro[F]{\mathbb{I}(\aelemother)}{\aelem}$, so we take the inner product on $\LieAlgebra=\bigl(\SOLieAlgebra(3)\bigr)^\numPep$ to be 
\begin{align*}
    \inpro[]{\aelemother}{\aelem} = -\sum_{i=1}^\numPep \traceop\bigl(\aelem_i \mathbb{I}_i(\aelemother_i)\bigr)
\end{align*}

The forward operator $\ForwardOp \colon \ShapeSpace \to \DataSpace$ in \cref{eq:TEMtotFwdOp} can, as described in \cref{sec:protein_specifics}, be computed more efficiently as follows: Use $\opStyle{M}(\relbbpos) = \bbpos$ to map the relative positions of the backbone \ch{C_{$\alpha$}} atoms $\relbbpos \in \ShapeSpace$ to the corresponding 3D cloud of atomic positions $\bbpos \in \AtmModSpace$.
\revision{Then, we geometrically project the atomic positions along the \ac{TEM} optical axis down to the 2D detector plane. 
Mathematically, the $i$:th component of the projection mapping $\pi \colon \AtmModSpace \to \Real^{2 \times \numPep}$ is given by
\[
\bigl( \pi(\bbpos) \bigr)_i 
  =  \begin{pmatrix} 
      \bbpos_i^x \\[0.5em]
      \bbpos_i^y
    \end{pmatrix} \in \Real^{2},
\]
where $\bbpos_i = (\bbpos_i^x,\bbpos_i^y,\bbpos_i^z) \in \Real^3$ denotes the $i$th component of $\bbpos$ and $i = 1,\ldots,\numPep$. 
Note that any projection onto a plane can be viewed as a projection onto the $(x,y)$-plane in the right coordinates. 
Therefore, by transforming coordinates, using the above projection and its adjoint, and transforming back into the original coordinates, all projections as well as their adjoints can be obtained. }
Finally, the projected atomic positions are mapped into $\LpSpace^2(\Real^2)$ by $\AtmToMapOpPlane \colon \Real^{2\times \numPep}\to \DataSpace$, that sends a 2D point cloud $\amodplane\in \Real^{2\times \numPep}$ to a linear combination of 2D Gaussians 
\[\AtmToMapOpPlane(\amodplane)(\Cdot) =
    \sum_{i=1}^{\numPep} s_i (\tau_{\amodplane_i} \circ \gauss_{\sigma_i})(\Cdot),
\]
where $s_1,\dots,s_{\numPep}$ are weights describing the relative importance of each atom and $\sigma_1,\dots,\sigma_{\numPep}$ are parameters determining the width of the Gaussians.
These parameters depend on the atom represented by the point $\amodplane_i$ in the point cloud, with $\gauss_{\sigma_i}$ denoting the function 
\[ \gauss_{\sigma_i}(x,y) := \frac{1}{2\pi\sigma_i^2}\exp\biggl(-\frac{x^2+y^2}{2\sigma_i^2}\biggr)
\]
and $\tau_{\amodplane_i} \colon \DataSpace \to \DataSpace$ denoting translation by the point $\amodplane_i$.
The forward operator in \cref{eq:TEMtotFwdOp} can then be expressed as 
\[
\ForwardOp = \AtmToMapOpPlane \circ \pi \circ \opStyle{M}. 
\]

Consider now the energy functional $\EnergyFunc \colon \ContSpace^{\infty}\bigl([0,1],\LieAlgebra\bigr) \to \Real$ in  \cref{eq:EnergyIndirectMatching}.
\Cref{cor:gradient_so3} in \cref{app:B} computes its gradient by applying \cref{th:gradient_general} to this setting. 
This yields a closed-form expression for the gradient of $\EnergyFunc(\acurve)$ that is useful in performing joint 3D reconstruction and model building (\cref{sec:Joint3DRecoModel}).

\begin{remark}\label{rem:double_proj}
    In \cref{sec:protein_specifics} we have assumed that the particle is in a single \revision{pose, i.e., we have included only one 2D image}. 
    It is possible to include \revision{multiple poses } into this framework with minimal changes, \revision{simply by using more than one image}. 
    Indeed, to have two \revision{images}, one takes two projection functions $\pi_1 \colon \AtmModSpace \to \Real^{2\times \numPep}$ and $\pi_2 \colon \AtmModSpace \to \Real^{2\times \numPep}$
    and proceeds to work with two forward models, 
    \[ \ForwardOp_1 = \AtmToMapOpPlane\circ \pi_1 \circ \opStyle{M}
      \quad\text{and}\quad
      \ForwardOp_2 = \AtmToMapOpPlane\circ \pi_2 \circ \opStyle{M}.
    \]
    One then sets $\tilde{\DataSpace} = \DataSpace \oplus \DataSpace$ and takes the target data to be $(\data_1,\data_2)$ so the data fidelity is
    \begin{align*}
    (\data_1,\data_2) \mapsto
        \frac{1}{2}\biggl(
        \Bigl\|
          \ForwardOp_1\bigl(\GroupAction(\gcurve_{\acurve(1)},\bbpostemplate)\bigr)-\data_1 
        \Bigr\|_{\LpSpace^2}^2 
        + \Bigl\|                    \ForwardOp_2\bigl(\GroupAction(\gcurve_{\acurve(1)},\bbpostemplate)\bigr)-\data_2 
        \Bigr\|_{\LpSpace^2}^2
        \biggr).
    \end{align*}
    The gradient of this modified functional is computed by taking the sum of the gradients given by \cref{cor:gradient_so3} applied to each term. 
    The extension to an arbitrary number of poses is straightforward. 
\end{remark}

\section{Numerical experiment on adenylate kinase}
\label{sec:experiment}
We now apply our method for indirect matching of 3D protein backbone structures. 
The focus of this paper is to introduce shape matching to the setting of protein data. 
Therefore, we perform illustrative proof-of-concept computations, which are deliberately kept simple.\footnote{All results presented in this section are produced using   code available at \url{https://github.com/erik-grennberg-jansson/protein-lddmm}. } 
\begin{figure}[!htb]
     \centering
     \begin{subfigure}[t]{0.4\textwidth}
         \centering
         \includegraphics[trim={7cm 7cm 6.5cm 7cm},clip, width=\textwidth]{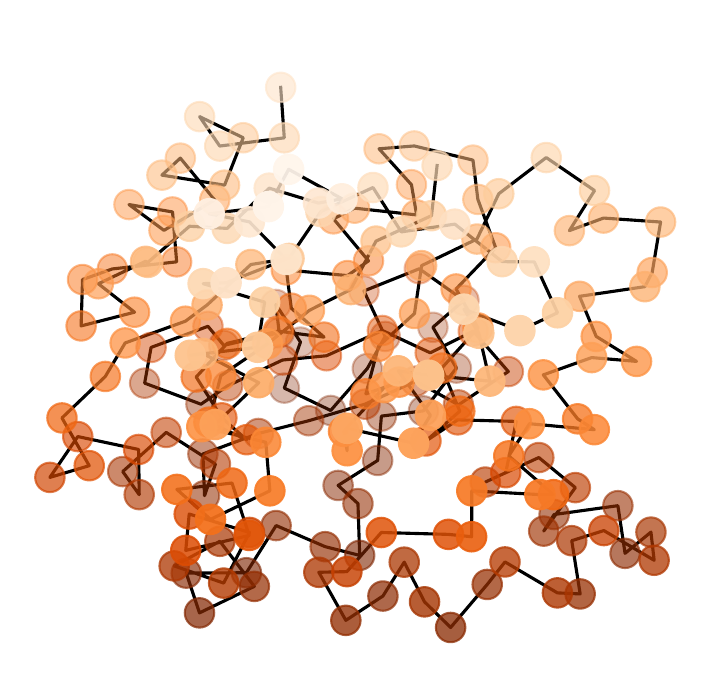}
         \caption{Initial, closed, protein backbone conformation. Obtained from the AlphaFold database entry A0A7H9QZH8.}
         \label{fig:initial_bb}
     \end{subfigure}
     ~
     \begin{subfigure}[t]{0.4\textwidth}
         \centering
         \includegraphics[trim={7cm 8cm 6cm 7cm},clip, width=\textwidth]{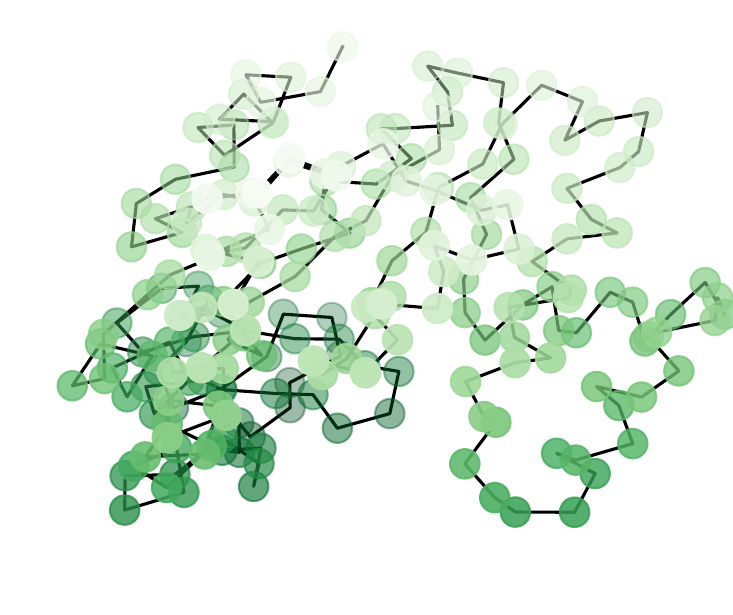}
         \caption{Open, protein backbone conformation to be recovered. Obtained from molecular dynamics simulation \cite{Beckstein2018,Seyler2015}.}
         \label{fig:target_bb}
     \end{subfigure}
     \caption{The initial and final conformation of the \ch{C_{$\alpha$}} atoms in the backbone of the closed-to-open adenylate kinase deformation. The figures depict the elements in $\AtmModSpace$.}
     \label{fig:target_template}
\end{figure}

The protein considered in this numerical experiment is an adenylate kinase protein, which is chosen as it has a clear closed-to-open transition.
Intuitively, this means that the protein initially has a ``closed lid'' that gradually opens, see \cref{fig:target_template} for an illustration of 3D arrangement of the open and closed backbone conformation.

\begin{figure}
    \centering
    \includegraphics[ width=0.8\textwidth]{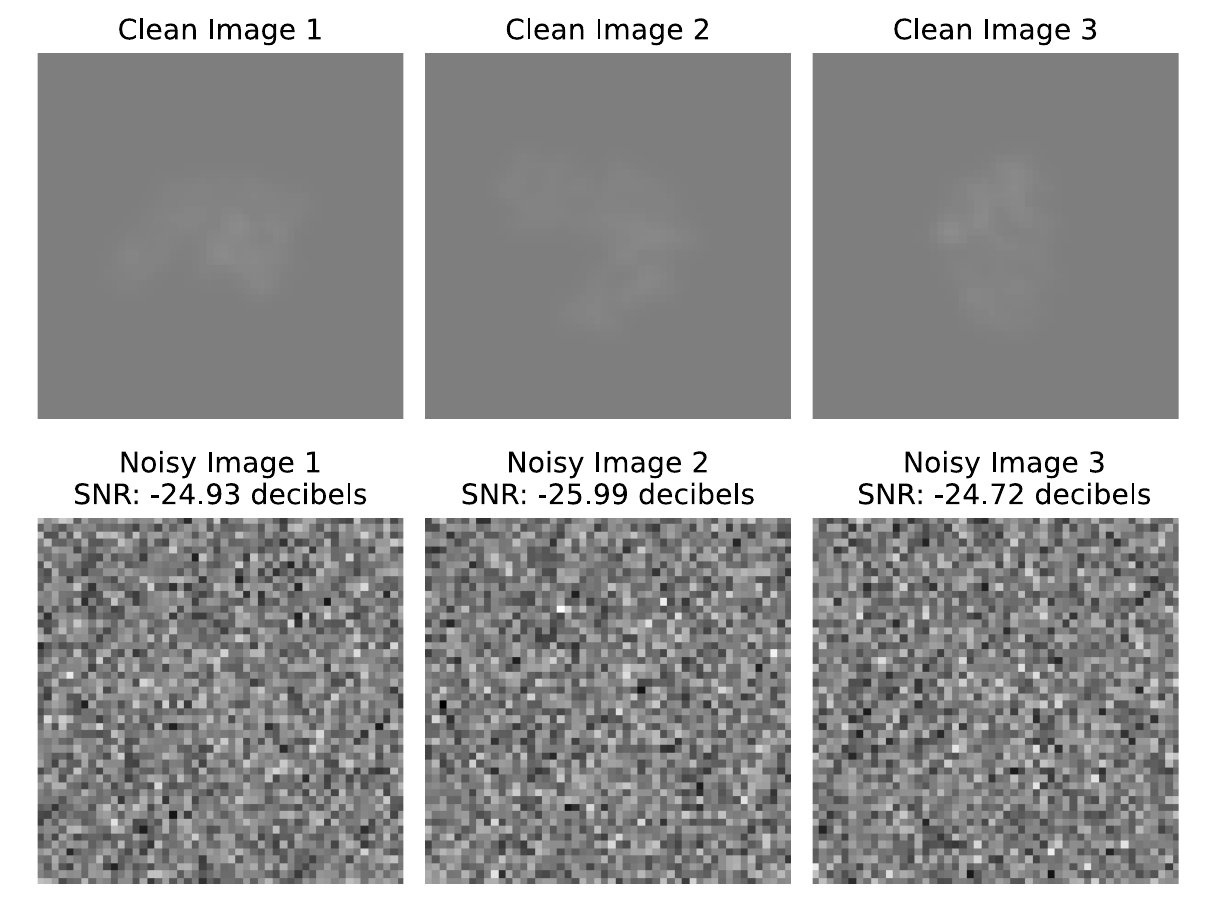}
    \caption{\revision{The data (in  the data space $\DataSpace$) used to deform the template together with intensity value scales to indicate the signal strength relative to the noisy images. The  three noisy images are obtained by  applying the forward model to the protein conformation in  \cref{fig:target_bb} and adding Gaussian noise with standard deviation $1.0$ to each pixel. }}
    \label{fig:data}
\end{figure}

To construct the target, we apply the forward model to the final frame of a protein trajectory from \cite{Beckstein2018,Seyler2015}, which was obtained by molecular dynamics simulations. 

To construct the template, we use  data from the AlphaFold database corresponding to the closed conformation\footnote{\url{https://alphafold.ebi.ac.uk/entry/A0A7H9QZH8}} \cite{Jumper2021,Varadi2021}. 
As AlphaFold data is not necessarily aligned with the molecular dynamics trajectory, we use Procrustes analysis to rotate and mirror the AlphaFold point cloud to align with the  first frame of the protein trajectory. 
Briefly, Procrustes analysis  computes the orthogonal linear transformation aligning one point cloud with another that is optimal in the sense that it minimizes the \emph{Procrustes distance}.  For details, see \cite{Gower1975,Kendall1989}. 
In future work, we aim to estimate the alignment of the AlphaFold data from the observed images. 
This would entail solving yet another inverse problem. 
Techniques to achieve this goal could be achieved by extending the method introduced in \cite{Diepeveen2023}. 

We only consider the \ch{C_{$\alpha$}} atoms in the protein backbone. 
We aim to reconstruct the structure in the final frame of the trajectory based off indirect observations of it, as described in \cref{sec:protein_specifics}. 
It is not a priori clear which, and how many, projections should be selected. 
Note that in the absence of noise, three projections: on the $(x,y)$-plane, the $(x,z)$-plane, and the $(y,z)$-plane capture all information about the protein.  
Therefore, a naive first approach is to proceed as in \cref{rem:double_proj}, but this time with three elements in $\DataSpace$ instead of two.

When we apply $\AtmToMapOpPlane$ to the projected point clouds, we use a width (standard deviation) of $2.0$. 
The images are in practice discretized as pixel images with resolution $50 \times 50$ pixels. 
We then add, to each pixel in the images, Gaussian observational noise  with standard deviation $1.0$.
With this choice of observational noise, the signal-to-noise ratio, computed by dividing the variance of the pixel image without noise with the noise variance, is on average, $0.003$, or expressed in decibels, $-25.181$ dB. \revision{ The signal-to-noise ratio is in line with, or somewhat lower, than the ones typically present for cryo-EM \cite{Bendory:2020aa}. 
The data, together with intensity value scales, is depicted in \cref{fig:data}.
}

To compute the matching, we use the path discretization method of \cref{alg:desc} so that the curve of algebra elements  $\acurve \colon [0,1] \to \bigl(\SOLieAlgebra(3)\bigr)^\numPep$ is discretized  at  fixed time grid points $t_n = n \delta t$ where $\delta t = 0.01$ and $n = 0,1,2, \ldots, 100$. 
Further, we make a non-informed of choice of the momentum mapping and set $\mathbb{I}$ to be the identity.

As described in \cref{sec:protein_specifics}, we must compute the deformation of the template by determining the path of group elements $\rotmat_0,\rotmat_{\delta t},\ldots \rotmat_{1}$ determined by $\acurve_0,\acurve_{\delta t},\ldots \acurve_{1}$. This is done by applying the Lie--Euler integrator with a step size of $\delta t =0.01$ to the flow equation $\dot \rotmat = \rotmat\acurve$,  
where $\rotmat_0$ is initialized as the identity element of $(\SO(3))^\numPep$. 
Applying this integrator guarantees that $\rotmat_1$ is an element of $(\SO(3))^\numPep$  \cite{IsMuKaNoZa2000}. 
\begin{figure}
    \includegraphics[trim={0cm 0cm 0cm 0cm},clip, width = \linewidth]{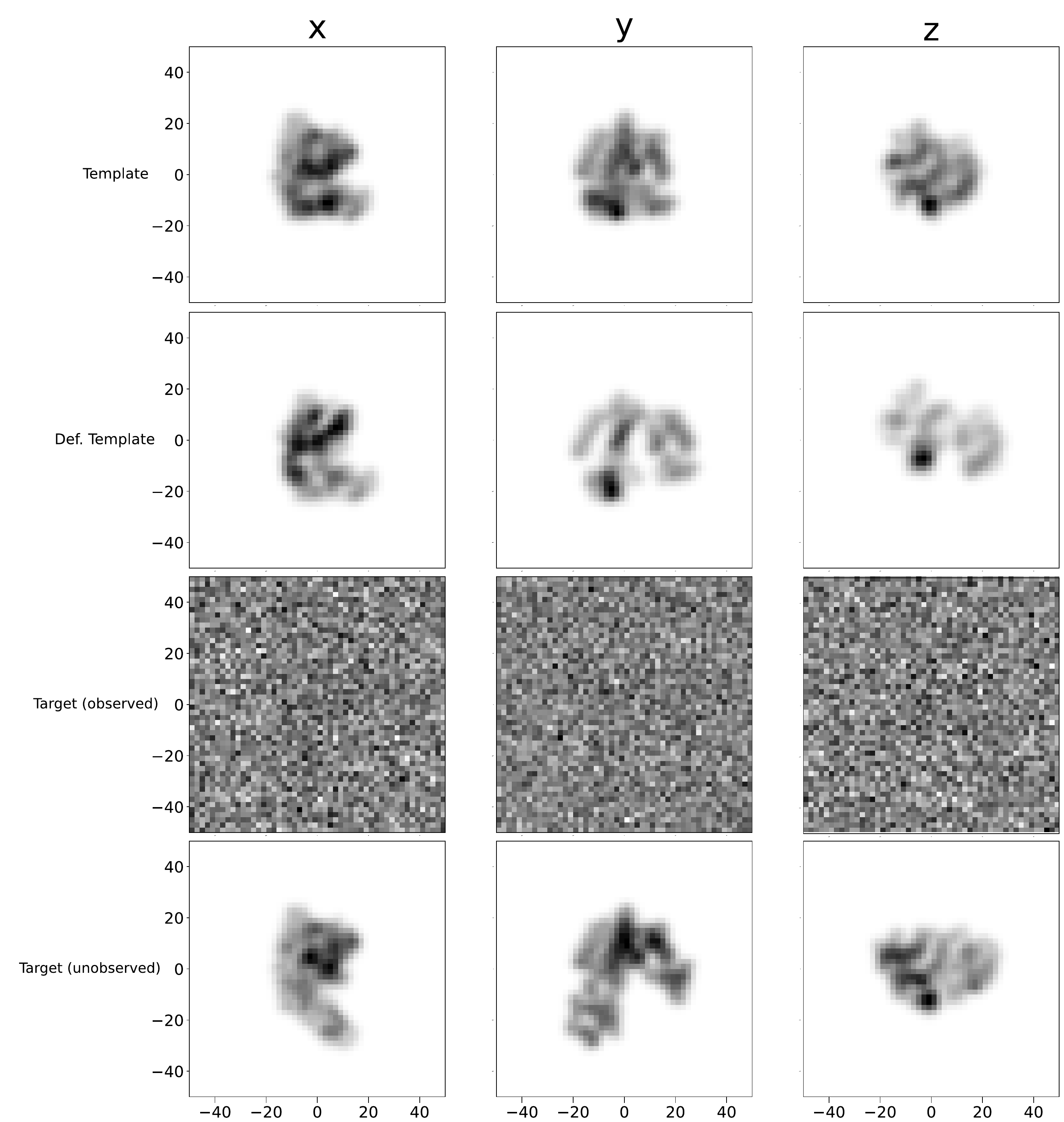}
    \caption{The results of applying the indirect shape matching algorithm to the template (\cref{fig:initial_bb}) using projections along the $x$, $y$ and $z$ axes. Note that while not perfect, the deformation along these axes decently captures the changes between the initial and target conformation. }
    \label{fig:results_projection}
\end{figure}
Finally, we compute the gradient of $\EnergyFunc$ with respect to $\acurve_t$, denoted by  $ \nabla_{\acurve_t} {\EnergyFunc}$ as outlined in \cref{sec:protein_specifics}. 
We then optimize $\EnergyFunc$ by using the L-BFGS-B algorithm implemented in the Python scientific computing package SciPy, \cite{2020SciPy-NMeth}. 
This algorithm is a limited-memory version of the BFGS algorithm, see \cite{Byrd1995} for details. 
To obtain a suitable initial guess for the optimizer, we run $50$ steps of gradient descent, with a step size of $10^{-5}$. \revision{The specific optimization procedure was selected empirically through iterative tuning. The energy landscape is likely highly non-convex with many local minima, and we suspect this contributes to some of the convergence variability observed, especially when only a few projection images are used. Nonetheless, the chosen setup consistently produced strong results in our experiments, also when varying the number of projection images, suggesting it is effective in practice. In future work, it may worthwile pursuing solutions to this problem based on for instance measure-based lifting as in \cite{Diepeveen2023}.} 

\Cref{fig:results_projection} illustrate $\ForwardOp_1$,  $\ForwardOp_2$ and $\ForwardOp_3$ applied to the non-deformed template shown in \cref{fig:initial_bb}, $\ForwardOp_1$,  $\ForwardOp_2$ and $\ForwardOp_3$ applied to the deformed template, the directly observed noisy target images as well as $\ForwardOp_1$,  $\ForwardOp_2$ and $\ForwardOp_3$ applied to the non-directly observed target shown in \cref{fig:target_bb}. 
All of these images are in $\DataSpace$. 
It is clear that we manage to capture the closed-to-open movement of the protein, at least along the directions of the projections. 
\begin{figure}
    \centering
    \includegraphics[ trim = {0 0cm 0 0.0cm}, clip, width=\textwidth]{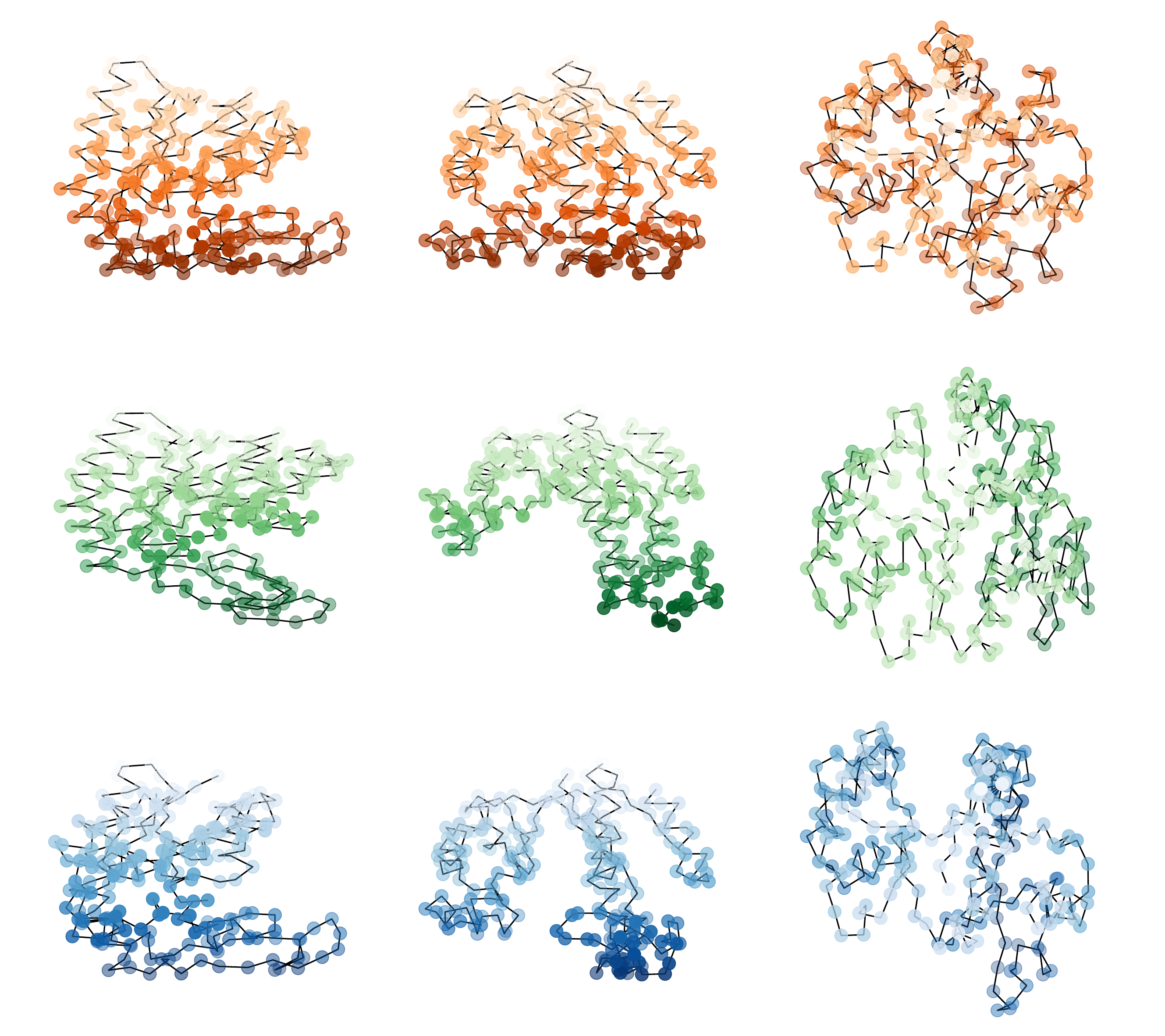}
    \caption{The reconstructed 3D arrangement (blue) together with the template (orange) and non-directly observed target (green), viewed along the same three projection angles as in \cref{fig:results_projection}. Rows depict template, target and deformed template  respectively. Note that we generally capture the deformation, especially the large closed-to-open movement in the $(x,z)$-plane (middle column).}
    \label{fig:xyz_3d_view}
\end{figure}

Consider now the reconstructed 3D protein conformation in \cref{fig:poor_reconstruction}, i.e., the reconstruction depicted in the set $\AtmModSpace$ of possible 3D arrangements of the backbone.

It is obvious that this is a poor reconstruction of the target conformation. 
However, the key takeaway is that we capture the deformation in the directions we project, see \cref{fig:xyz_3d_view}, in which the reconstruction is depicted in the $3D$ arrangement, i.e., as an element in $\AtmModSpace$. 
\begin{figure}[!htb]
     \centering
     \begin{subfigure}[t]{0.4\textwidth}
      \centering
         \includegraphics[trim={7cm 7cm 6.5cm 7cm},clip, width=\textwidth]{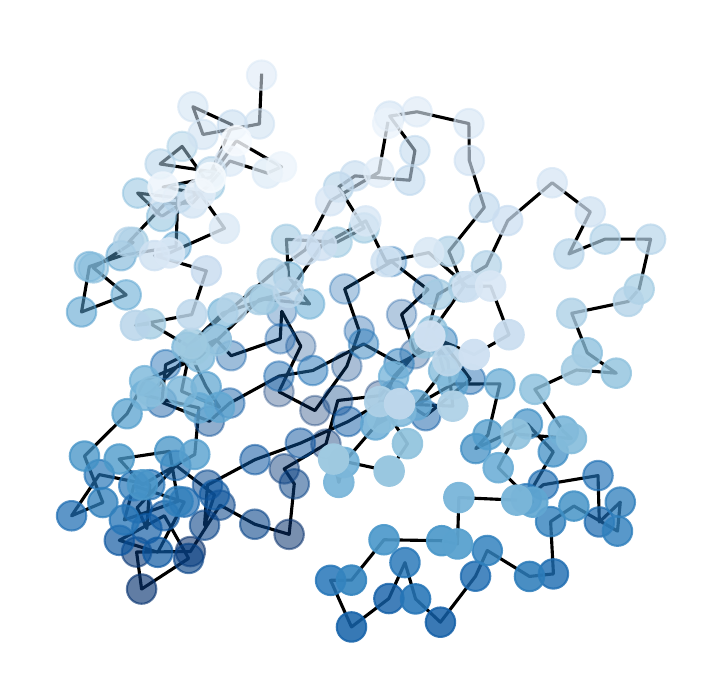}
         \caption{3 projection images.}
         \label{fig:poor_reconstruction}
     \end{subfigure}
     ~
     \begin{subfigure}[t]{0.4\textwidth}
         \centering
         \includegraphics[trim={7cm 8cm 6cm 7cm},clip, width=\textwidth]{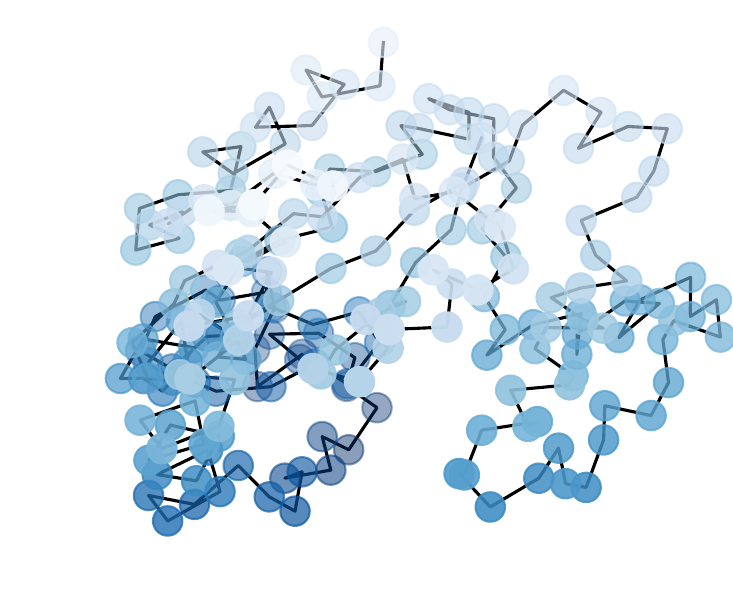}
         \caption{300 projection images.}
         \label{fig:better_reconstruction}
     \end{subfigure}
     \label{fig:results_3d_plot}
     \caption{Two reconstructions of the target conformation using a different number of projections. Note that this depicts the 3D spatial arrangements, i.e, elements in $\AtmModSpace$. By comparing with \cref{fig:target_bb}, it is clear that the number of projections improves the reconstruction.}
\end{figure}

To improve the reconstruction, we increase number of projection images to $300$ and proceed as in \cref{rem:double_proj} (but time with $300$ elements in $\DataSpace$ instead of two).
The random projections are selected by uniformly sampling 300 rotation matrices that each describe a change to coordinates in which the projection is computed as the projection on the $(x,y)$-plane. 

Keeping everything else as above, we recover the reconstruction in \cref{fig:better_reconstruction}.
In \cref{fig:results} we illustrate the deformation path of the 3D arrangements of the template, i.e., the deformation path in $\AtmModSpace$. 
Comparing the deformed template to the target conformation in \cref{fig:target_bb}, it is clear that this is a better reconstruction. 
We manage to capture the closed-to-open movement of the adylenate protein and decently compute a reconstruction of the final structure.
The matching is by no means perfect, but the large-scale movement is indeed captured and highlights that shape matching can be used for this purpose. 
\begin{figure}
    \centering
    \includegraphics[width = \linewidth]{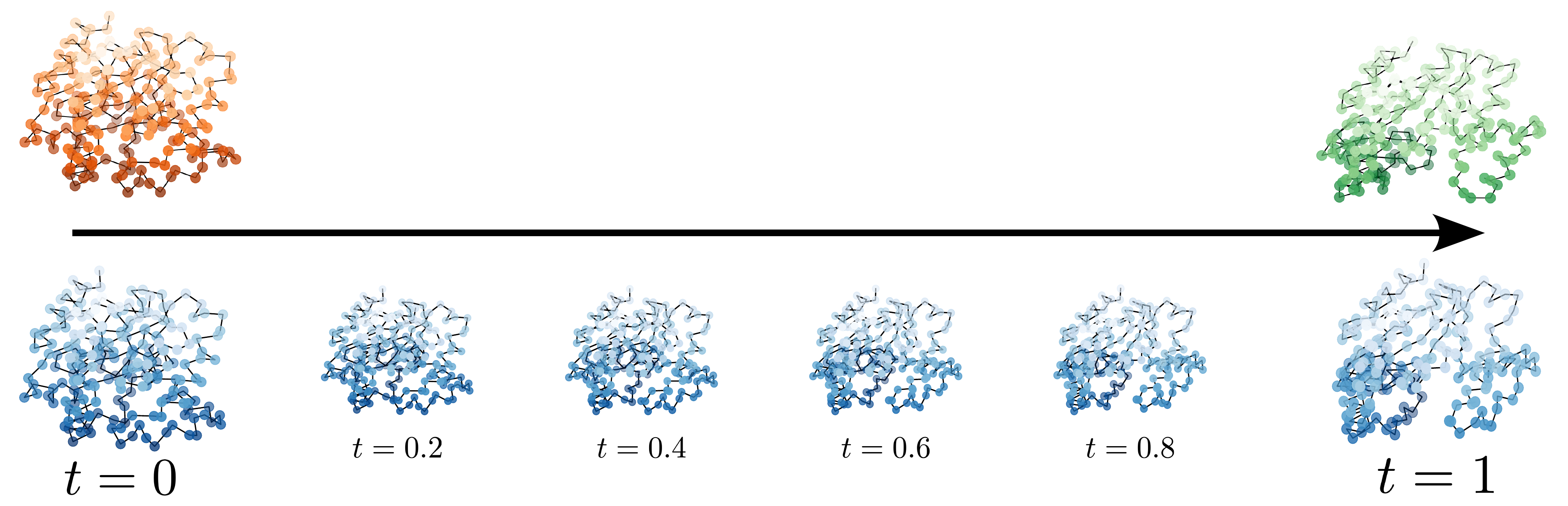}
    \caption{The results of applying the indirect shape matching algorithm to the template (orange structure). The evolution is shown at each included time grid point (blue structures). Compare with the final, not directly observed, frame (green structure). }
    \label{fig:results}
\end{figure}

To give a quantitative assessment of the dependence of the quality of the reconstruction on the number of projections, we use Procrustes analysis to compare the deformed point cloud to the reconstructed conformation. 
Procrustes analysis aligns points clouds by minimizing the \emph{Procrustes score}, which is a statistical measure of shape similarity. A lower score means that two compared point clouds are more similar \cite{Gower1975,Kendall1989}. 
We run, for $k = 2^1, 2^2, \dots, 2^9$ randomly chosen projections, the reconstruction procedure as described above, using an image noise standard deviation of $1.0$.

To give an idea of the uncertainty of the reconstruction, we repeat the procedure, with new noisy targets, $20$ times for each noise level. 
Then, the $20$ reconstructions are assessed against the ground truth, i.e., the target point cloud, by computing the average Procrustes score. 
This results in \cref{fig:proc-curve}, where the average Procrustes score is plotted against the number of projections in a log-log scale, together with the 90\% quantiles. 
\Cref{fig:proc-curve} confirms that indeed, more projections results in a lower Procrustes score. This indicates that the reconstructed protein conformation is closer to the target conformation in the sense of point clouds if more projections are used. 
\Cref{fig:proc-curve} indicates that the Procrustes score decreases as $\opStyle{O}(N_P^r)$, where $N_P$ is the number of projections. 
A linear regression fit estimates the slope $r$ as $-0.48$, indicating that $r \approx 0.5$.

Moreover, the uncertainty becomes smaller as the number of projections goes up. 
This translates in practice to the fact that if we just have a few randomly chosen projections, we need to be lucky and obtain good projection angles, whereas if we have more, we can be more certain that we have enough projection angles to obtain a good reconstruction.

\begin{figure}
    \centering
    \includegraphics[scale = 0.6]{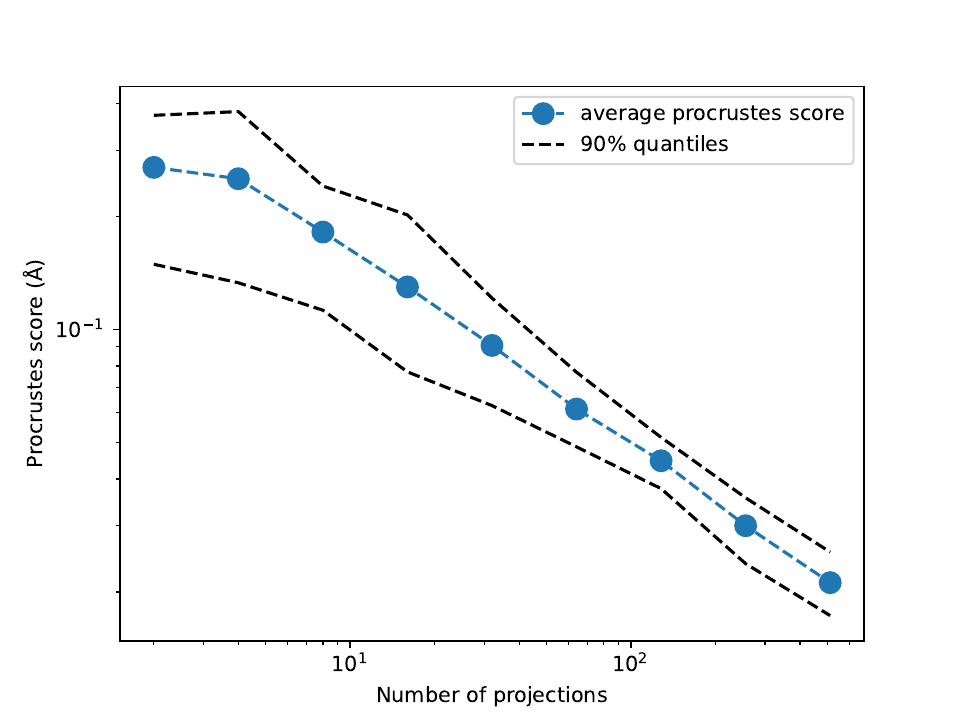}
    \caption{The average Procrustes score, in Ångströms, is plotted against the number of projections in a log-log scale. Note that the more projections that are used, the lower the Procrustes score. }
    \label{fig:proc-curve}
\end{figure}

We perform a similar experiment to investigate the effect of the noise. 
For a fixed number of projections, we run the reconstruction algorithm as above, but with varying image noise standard deviations. 
For $10$, $50$ and $100$ projections, we let the noise standard deviations take the values  $0.1,0.2,0.4,0.8,1.2$ and $2.4$, repeating the reconstruction  $M = 20$ times. 
The $20$ reconstructions are assessed against the ground truth by computing the Procrustes scores. 
This results in \cref{fig:proc-curve-noise}, where the average Procrustes score is plotted against the number of projections in a log-log scale, together with the 90\% quantiles.  
We observe that, as expected, more projections increase the reconstruction performance for higher noise standard deviations, but that the decrease in performance is qualitatively similar for $10, 50$ and $100$ projections. 
The number of projections further seems to influence how the width of the quantiles increases as the variance of the image noise increases, more projection images results in less wide quantiles, indicating that the uncertainty of the reconstruction is dependent on the number of projections as well as on the noise level. 

\begin{figure}
    \centering
    \includegraphics[scale = 0.6]{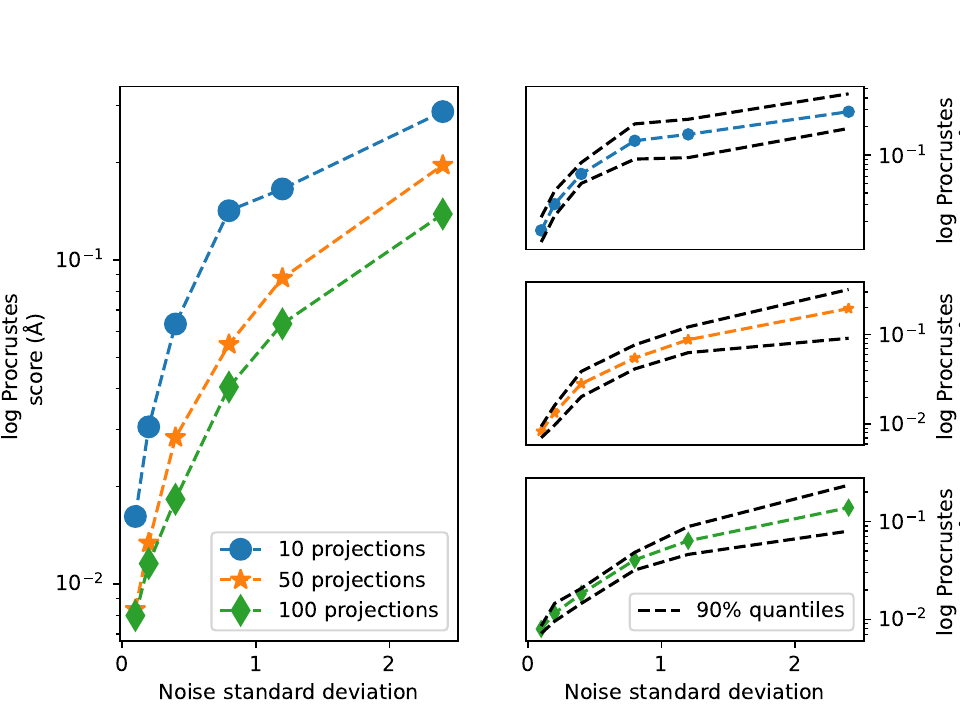}
    \caption{The average Procrustes score, in Ångströms, is plotted against the number of projections in a log-lin scale. Note that the robustness against noise increases with increasing number of projections. Further, the number of projections seems to influence the width of the quantiles, indicating that the uncertainty of the reconstruction is dependent on the number of projections as well as on the noise level. }
    \label{fig:proc-curve-noise}
\end{figure}

There are other ways than Procrustes analysis to quantitatively assess the reconstruction, see for instance \cite{Taylor2001}.
We intend to investigate the use of other protein similarity measures in future work.

\section{Conclusions and outlook}
\label{sec:outlook}

This work demonstrates that geodesic shape matching can be used for solving \cref{eq:Joint3DModelHomo}, which formalizes the task of reconstructing the 3D backbone conformations of a protein from noisy \ac{Cryo-SPA} data. 
The conformations of a protein backbone are modelled as deformations of a template.
The deformation is given by the action of a finite-dimensional group on a suitable set of deformable options, which in this case is the $\numPep$-fold direct product of $\SO(3)$.  
To find the deformation associated with shape matching, which is the conformation, we minimize the functional in \cref{eq:general_energy}.
For this, we presented two gradient-based algorithms, and we  illustrate the method with an example. The performance was assessed quantitatively using Procrustes analysis, as we varied the amount of data and its noise level.  
Results strongly suggest that geodesic shape matching can be used for reconstructing 3D backbone conformations of a protein from  noisy \ac{Cryo-SPA} data. 

The above approach can be extended in several directions, all motivated by solving the more challenging problem \cref{eq:Joint3DModelHetro} that arises in \ac{Cryo-SPA}. 
A straightforward extension is to incorporate pose estimation by combining the method presented in this paper with, for instance, existing pose estimation algorithms, like \cite{Diepeveen:2023aa}.
One could also have an intertwined approach where poses are estimated or marginalized as part of a refinement scheme.
Hence, we claim that one can drop the assumption in 
\cref{eq:Joint3DModelHomo} of known pose.

The flexibility of the shape matching framework can also be used for other extensions. 
In principle, with all else fixed, the geometric behavior of the method is only affected by the choice of regularization term. 
This opens up for several directions of future research. 

First, the regularization term in \cref{eq:RegFunc} is fixed up to the choice of the symmetric positive semi-definite operator $\mathbb{I}$ that is implicitly given by the relation \cref{eq:IOperator}.
As an example, the $i$:th component of $\mathbb{I}$ models the inertia of the $i$:th \ch{C_{$\alpha$}} atom.
Hence, one can in principle account for the influence of side-chains, since a long, heavy side-chain means that the protein rotates more easily in certain directions. 
Likewise, $\mathbb{I}$ could also account for  internal forces in the protein that are generated by electrical and chemical interactions. 

Second, our framework allows for selecting different data fidelity terms, i.e., different $\LossFunc_{\DataSpace}$ in \cref{eq:general_energy}. 
This makes it possible to account for chemical bonds and nonlinear interactions between atoms in the protein. 
In addition, one can incorporate an optimal-transport based similarity measure, which would allow for increased robustness to noise in data.

\revision{A third line of development concerns extending our shape-based indirect registration method to the heterogeneous particle setting, where the particle conformations in the sample are sampled from a continuum of possible conformations, and we aim to reconstruct a conformation for each image in the dataset.
Here one has the option to fit a single curve of deformations to all of the data (2D particle image), or have separate curves of deformations for each data point.
The second option amounts to applying the method presented in this work separately to each of the conformations represented in the sample.
Although less straightforward, the first option seems more attractive since it would make joint use of the data.
In order to make use of this approach, a potentially interesting line of research would be to investigate if the metric parametrised by $\mathbb{I}$ could be chosen in such a way that the conformational variability of the particle would be approximately described by a geodesic, which would motivate the use of $\RegFunc$ as a regulariser.
In this setting we would also need to account for both particle poses and matching data points to points on the curve.
One way of doing this is to view particle specific poses and the temporal ordering of conformational states as random variables and then to marginalise over these.
More precisely, we introduce the $\SE(3)$-valued random variables $\stpose_i\sim\mathrm{P}^i(\SE(3))$ that describe the pose (rotation and position) in space of each particle. 
Likewise, we also introduce $[0,1]$-valued random variables $\stt_i\sim\mathrm{P}^i([0,1])$ that represent the temporal ordering of the conformational state for the particle.
Many methods for 3D map reconstruction that handle unknown poses, like RELION \cite{Scheres:2012ab}, use an iterative refinement procedure that alternates between updating pose distributions and updating the 3D map while marginalising over poses. This could be extended by intertwining updates for the distributions of temporal orderings. 
Likewise, the joint reconstruction and model building step could include a marginalisation over poses and temporal orderings:
\begin{equation*}
     \hat{\mathbb{I}} \in \argmin_\mathbb{I} \biggl\{ \min_{\acurve}
  \sum_{i=1}^\numImgs \mathbb{E}_{\subalign{& \stt_i  \sim\mathrm{P}^i([0,1]) \\ & \stpose_i \sim\mathrm{P}^i(\SE(3))}}\left[\LossFunc_{\DataSpace}\Bigl((\ForwardOp\circ \GroupAction)\bigl(\gcurve^{\acurve}(\stt_i),\stpose_i.\template\bigr),\target_i\Bigr)\right]
    + \lambda \RegFunc(\acurve,\mathbb I) \biggr\}.
\end{equation*}
Given $\hat{\mathbb{I}}$, we can then compute a minimizer $\hat{\acurve}$ that in turn yields a curve $\gcurve^{\hat{\acurve}}$ which describes the conformational variability of the biomolecule.
Here, we made the dependence of $\RegFunc$ on $\mathbb{I}$ explicit, and $\pose_i.\template \in \ShapeSpace$ for some $\pose_i \in \SE(3)$ is the rotated and translated template $\template$ given by acting with the roto-translation group $\SE(3)$ on the shape space $\ShapeSpace$.
The above formulation is in its current form not feasible for implementation, but it provides a starting point to adapt our approach for joint reconstruction and model building to the heterogenous particle setting.}

A final direction of future research is to forgo the geodesic shape matching framework altogether, and instead work directly with gradient flows on the space of protein backbones chains using an adaptation of the method described in \cite{BaKaMo2022}.  
Such an approach should result in a substantial reduction of complexity without impairing upon performance. 

\newpage
\appendix
\section{Proof of \cref{th:gradient_general}}
\label{app:A}
We here prove \cref{th:gradient_general} in the case when $\LieGroup$ is a matrix Lie group. The proof is an adaptation of the proof of \cite[Theorem~2.1]{beg2005}. 
Hence, we start with the following finite-dimensional variant of \cite[Lemma~2.1]{beg2005} that computes the variation of the $\gcurve^{\acurve} \colon [0,1] \to \LieGroup$ where $\gcurve^{\acurve}$ solves the flow equation \cref{eq:flow2} under the perturbation of the curve $\acurve \colon [0,1] \to \LieAlgebra$.
\begin{lemma}\label{th:variation}
  Let $\acurve$ be a smooth curve in the Lie algebra $\LieAlgebra$ of the matrix Lie group $\LieGroup$ and let $\gcurve^{\acurve} \colon [0,1] \to \LieGroup$ denote the curve that solves the flow equation $\dot \gcurve^{\acurve} = \acurve(t)\gcurve^{\acurve}(t)$, $\gcurve^{\acurve}(0)=e$. 
  The (infinitesimal) variation of the $\LieGroup$-curve $\gcurve^{\acurve}$, denoted by $\delta_{\dacurve} \gcurve$, as the $\LieAlgebra$-curve $\acurve$ is perturbed with $\dacurve \in \LpSpace^2\bigl([0,1],\LieAlgebra\bigr)$ is then given by
  \begin{equation}\label{eq:var}
    \delta_{\dacurve} \gcurve^{\acurve}(t) = dL_{\gcurve^{\acurve}(t)} \biggl( \int_0^t \AdjRep_{(\gcurve^{\acurve}(s))^{-1}}\bigl(\dacurve(s)\bigr) \dint s
    \biggr)
  \end{equation}
  where the adjoint representation $\AdjRep_\gelem \colon \LieAlgebra \to \LieAlgebra$ for fixed $\gelem \in \LieGroup$ is given by $\AdjRep_{\gelem}(\aelem) := \gelem \aelem \gelem^{-1}$.
\end{lemma}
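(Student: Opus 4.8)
The plan is to linearize the flow equation \cref{eq:flow2} in the perturbation direction $\dacurve$ and then solve the resulting inhomogeneous linear ODE by variation of parameters, exploiting the fact that $\gcurve^{\acurve}$ is itself a fundamental solution of the associated homogeneous equation. Since $\LieGroup$ is a matrix Lie group, $dL_{\gelem}$, $dR_{\gelem}$ and $\AdjRep_{\gelem}$ are all realized by matrix multiplication, so every manipulation below is an elementary matrix computation with no spatial derivatives involved.

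First I would introduce the one-parameter family $\acurve_\epsilon := \acurve + \epsilon\dacurve$ and let $\gcurve_\epsilon := \gcurve^{\acurve_\epsilon} \colon [0,1] \to \LieGroup$ denote the corresponding solution of the flow equation $\dot\gcurve_\epsilon(t) = \acurve_\epsilon(t)\gcurve_\epsilon(t)$ with $\gcurve_\epsilon(0)=e$. Because the initial condition $\gcurve_\epsilon(0)=e$ does not depend on $\epsilon$, smooth dependence of ODE solutions on parameters ensures that the variation $W(t) := \partial_\epsilon \gcurve_\epsilon(t)\big|_{\epsilon=0} = \delta_{\dacurve}\gcurve^{\acurve}(t)$ exists, is absolutely continuous, and satisfies $W(0)=0$.

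Next I would differentiate the flow equation with respect to $\epsilon$ at $\epsilon=0$, interchanging $\partial_\epsilon$ and $\partial_t$, to obtain the variational equation
\[
  \dot W(t) = \dacurve(t)\gcurve^{\acurve}(t) + \acurve(t)W(t), \qquad W(0)=0.
\]
This is a linear inhomogeneous matrix ODE whose homogeneous part $\dot W = \acurve(t)W$ has $\gcurve^{\acurve}$ as a fundamental solution, since $\gcurve^{\acurve}$ solves $\dot\gcurve^{\acurve}=\acurve\gcurve^{\acurve}$ and takes invertible values in $\LieGroup$ for every $t$. I would therefore use the ansatz $W(t)=\gcurve^{\acurve}(t)C(t)$. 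Substituting this and cancelling the homogeneous terms $\acurve(t)W(t)$ leaves $\gcurve^{\acurve}(t)\dot C(t)=\dacurve(t)\gcurve^{\acurve}(t)$, i.e.
\[
  \dot C(t) = \bigl(\gcurve^{\acurve}(t)\bigr)^{-1}\dacurve(t)\gcurve^{\acurve}(t) = \AdjRep_{(\gcurve^{\acurve}(t))^{-1}}\bigl(\dacurve(t)\bigr).
\]
Integrating from $0$ with $C(0)=0$ gives $C(t)=\int_0^t \AdjRep_{(\gcurve^{\acurve}(s))^{-1}}(\dacurve(s))\dint s$, and recognizing that left multiplication by $\gcurve^{\acurve}(t)$ is exactly $dL_{\gcurve^{\acurve}(t)}(\Cdot)$ yields precisely \cref{eq:var}.

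The main obstacle is analytic rather than algebraic: because $\dacurve$ is assumed only to lie in $\LpSpace^2\bigl([0,1],\LieAlgebra\bigr)$, the perturbed vector field $\acurve_\epsilon$ is merely integrable in $t$, so the flow and its $\epsilon$-derivative should be understood in the Carathéodory sense, and the interchange of $\partial_\epsilon$ with $\partial_t$ (and with the integral defining $C$) must be justified by dominated convergence together with Grönwall-type estimates rather than by naive smoothness in $t$. Once the variational equation is established on the set of full measure where it holds, the variation-of-parameters step is routine, so the bulk of the care goes into the regularity bookkeeping for the low-regularity perturbation $\dacurve$.
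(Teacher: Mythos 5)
Your proposal is correct and takes essentially the same route as the paper's proof: perturb $\acurve_\epsilon = \acurve + \epsilon\dacurve$, differentiate the flow equation at $\epsilon = 0$ to obtain the linear inhomogeneous ODE $\dot W(t) = \acurve(t)W(t) + \dacurve(t)\gcurve^{\acurve}(t)$ with $W(0)=0$, and identify its solution $\gcurve^{\acurve}(t)\int_0^t \AdjRep_{(\gcurve^{\acurve}(s))^{-1}}\bigl(\dacurve(s)\bigr)\dint s$ with the claimed $dL_{\gcurve^{\acurve}(t)}$-expression. The only cosmetic differences are that you derive the solution constructively via a variation-of-parameters ansatz where the paper writes down the formula and verifies it by differentiation, and that you explicitly flag the Carath\'eodory/Gr\"onwall regularity bookkeeping for $\dacurve \in \LpSpace^2\bigl([0,1],\LieAlgebra\bigr)$, which the paper passes over silently.
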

\begin{proof}
$\LieGroup$ is a matrix Lie group, so the curve $\gcurve^{\acurve} \colon [0,1] \to \LieGroup$ with $\acurve \in \LpSpace^2\bigl([0,1],\LieAlgebra\bigr)$ given by in \cref{eq:flow2} takes the form
\[
    \dot \gcurve^{\acurve}(t) = \acurve(t)\gcurve(t), \, \gcurve(0) = I \in \LieGroup.
\]
Let us take an admissible first order variation of $\acurve$ in the $\dacurve$-direction, which is fixed, by considering  
\[ \acurve_{\epsilon}(t) 
     := \acurve(t) + \epsilon \dacurve(t).
\]
This variation induces a variation in $\gcurve^{\acurve}$ along the $\dacurve$-direction that can be expressed as 
\begin{equation}\label{eq:VarCurve}
    \dot \gcurve^{\acurve_{\epsilon}}(t) = \bigl(
      \acurve(t) + \epsilon \dacurve(t) \bigr) 
      \gcurve^{\acurve_{\epsilon}}(t).  
\end{equation}
Furthermore, we also get that the infinitesimal variation in $\gcurve^{\acurve}$ along the $\dacurve$-direction is given by 
\begin{equation}\label{eq:VarCurve2}
\delta_{\dacurve} \gcurve^{\acurve}(t) 
  = \dfrac{d}{d \epsilon}
     \gcurve^{\acurve_{\epsilon}}(t)\Bigl\vert_{\epsilon=0}.
\end{equation}
Differentiate \cref{eq:VarCurve} with respect to $\epsilon$ and evaluate it at $\epsilon = 0$. Combining this with \cref{eq:VarCurve2} yields the following differential equation:
\begin{equation}\label{eq:ode}
  \partial_t \delta_{\dacurve} \gcurve^{\acurve}(t) = \acurve(t) \delta_{\dacurve} \gcurve^{\acurve}(t)+ \dacurve(t) \gcurve^{\acurve}(t).
\end{equation}  
The variation of $\gcurve^{\acurve}(t)$ along the $\dacurve$-direction is thus given by a solution of \cref{eq:ode}, which in turn can be expressed as
\begin{equation}\label{eq:variation}
  \delta_{\dacurve} \gcurve^{\acurve}(t) = \gcurve^{\acurve}(t) \int_0^t (\gcurve^{\acurve}(s))^{-1} \dacurve(s)\gcurve(s)\dint s. 
\end{equation}
Indeed, differentiating \cref{eq:variation} with respect to $t$ yields \cref{eq:ode}.
To complete the proof, we next insert the definition of the adjoint representation. This gives us 
\[ 
\bigl(\gcurve^{\acurve}(s)\bigl)^{-1} \dacurve(s)\gcurve^{\acurve}(s) =  \AdjRep_{(\gcurve^{\acurve}(s))^{-1}} 
\bigl(\dacurve(s)\bigr).
\]
Finally, we note that, 
\[ 
\int_0^t \Bigl(
  \gcurve^{\acurve}(s)\bigr)^{-1} 
  \dacurve(s)
  \gcurve^{\acurve}(s)
  \Bigr)
  \dint s 
  \in \LieAlgebra
  \quad\text{for any $t \in [0,1]$,}
\]
so 
\[
\delta_{\dacurve} \gcurve^{\acurve}(t) 
= \gcurve^{\acurve}(t)\int_0^t 
\bigl(\gcurve^{\acurve}(s)\bigr)^{-1} \dacurve(s)
\gcurve^{\acurve}(s)
\dint s 
= dL_{\gcurve(t)}\biggl( 
  \int_0^t \AdjRep_{(\gcurve^{\acurve}(s))^{-1}} \bigl( \dacurve(s) \bigr) \dint s
  \biggr). 
\]
This completes the proof. 
\end{proof}
\begin{proof}[Proof of \cref{th:gradient_general}]
The Gâteaux derivative of $\EnergyFunc \colon \ContSpace^{\infty}\bigl([0,1],\LieAlgebra\bigr) \to \Real$ (matching energy) in \cref{eq:general_energy} in the direction $\dacurve\in \LpSpace^2\bigl([0,1],\LieAlgebra\bigr)$ is defined as 
\begin{equation}\label{eq:GDiffE}
\delta_{\dacurve} \EnergyFunc(\acurve) :=\lim_{\varepsilon\to 0}\frac{\EnergyFunc(\acurve+\varepsilon \dacurve)-\EnergyFunc(\acurve)}{\varepsilon}.    
\end{equation}
If we define $\residual_{\template,\target} \in \DataSpace$ as in \cref{eq:Residual}, then we can re-write \cref{eq:GDiffE} as
\[
\delta_{\dacurve} \EnergyFunc(\acurve)
  =\int_0^1 \bigl\langle \acurve(t),\dacurve(t) \bigr\rangle_{\LieAlgebra} \dint t
  + 
  \biggl\langle 
  \residual_{\template,\target},
  d\bigl( \ForwardOp \circ \GroupAction(\Cdot,\template)\bigr)_{\gcurve^{\acurve}(1)} 
  \bigl(\delta_{\dacurve}\gcurve^{\acurve}(1)\bigr)
\biggr\rangle_{\DataSpace}.
\]
The proof is now completed by using \cref{th:variation} to rewrite the second term in the above expression.
To see this, note first that
\begin{multline*}
\biggl\langle 
  \residual_{\template,\target},
  d\bigl( \ForwardOp \circ \GroupAction(\Cdot,\template)\bigr)_{\gcurve^{\acurve}(1)} 
  \bigl(\delta_{\dacurve}\gcurve^{\acurve}(1)\bigr)
\biggr\rangle_{\DataSpace}
\\ \shoveleft{\qquad
=
\biggl\langle 
\residual_{\template,\target}, 
\Bigl( d\bigl(\ForwardOp\circ\GroupAction(\Cdot,\template)\bigr)_{\gcurve^{\acurve}(1)} \circ
d(L_{\gcurve^{\acurve}(1)})_e \Bigr) \Bigl(\int_0^1 \AdjRep_{(\gcurve^{\acurve}(t))^{-1}} \bigl(\dacurve(t)\bigr) \dint t \Bigr)
\biggr\rangle_{\DataSpace}
}
\\ \shoveleft{\qquad
=
\biggl\langle 
\residual_{\template,\target}, 
\varpi\Bigl(\int_0^1\AdjRep_{(\gcurve^{\acurve}(t))^{-1}} \bigl(\dacurve(t)\bigr) \dint t \Bigr)
\biggr\rangle_{\DataSpace}
}
=
\biggl\langle 
\varpi^*(\residual_{\template,\target}), 
\int_0^1\AdjRep_{(\gcurve^{\acurve}(t))^{-1}} \bigl(\dacurve(t)\bigr) \dint t
\biggr\rangle_{\DataSpace}
\end{multline*}
where $\varpi \colon \LieAlgebra \to \DataSpace$ is the linear mapping 
\[ \varpi := d\bigl(\ForwardOp\circ\GroupAction(\Cdot,\template)\circ L_{\gcurve^{\acurve}(1)}\bigr)_e.
\]
The final step is to exchange the order with which we take the $t$-integral and the $\DataSpace$-inner product:
\begin{multline*}
\biggl\langle 
  \residual_{\template,\target},
  d\bigl( \ForwardOp \circ \GroupAction(\Cdot,\template)\bigr)_{\gcurve^{\acurve}(1)} 
  \bigl(\delta_{\dacurve}\gcurve^{\acurve}(1)\bigr)
\biggr\rangle_{\DataSpace}
=
\biggl\langle 
\varpi^*(\residual_{\template,\target}), 
\int_0^1\AdjRep_{(\gcurve^{\acurve}(t))^{-1}} \dacurve(t) \dint t
\biggr\rangle_{\DataSpace}
\\
=\int_0^1\Bigl\langle 
  \varpi^*(\residual_{\template,\target}), 
  \AdjRep_{(\gcurve^{\acurve}(t))^{-1}} \bigl( \dacurve(t) \bigr)
  \Bigr\rangle_{\LieAlgebra} \dint t
\\
=\int_0^1\Bigl\langle 
\bigl(\AdjRep_{(\gcurve^{\acurve}(t))^{-1}}\bigr)^*
\bigl( \varpi^*(\residual_{\template,\target}) \bigr), 
\dacurve(t)
\Bigr\rangle_\LieAlgebra dt.
\end{multline*}
\end{proof}

\section{Gradient calculations}
\label{app:B}
Here, we detail the results of applying the setting of \cref{sec:protein_specifics} to  \cref{th:gradient_general} to obtain a closed-form expression for the gradients used in computing the matching in \cref{sec:experiment}. 

\begin{corollary}\label{cor:gradient_so3}
Consider the Lie group $\LieGroup = (\SO(3))^\numPep$ that acts on deformable objects in the shape space $\ShapeSpace= \Real^{3 \times \numPep}$. Moreover, assume observed data resides in $\DataSpace = \LpSpace^2(\Real^2)$.
Next, consider the functional $\EnergyFunc \colon \ContSpace^{\infty}\bigl([0,1],\LieAlgebra\bigr) \to \Real$ defined as in \cref{eq:general_energy} where $\LieAlgebra=\bigl(\SOLieAlgebra(3)\bigr)^\numPep$ is the Lie algebra to $\LieGroup$.
The target in $\DataSpace$ is denoted by $\target$
and the template in $\ShapeSpace$ by $\template$. 
Then, the gradient of $\EnergyFunc$ is given as
\begin{equation}
    \nabla_{\acurve} \EnergyFunc(\acurve) =\lambda \acurve + \Bigr( \AdjRep_{(\gcurve^{\acurve})^{-1}}^* \circ \,d\bigl(\GroupAction(\cdot, \template) \circ L_{\gcurve^{\acurve}(t)}\bigr)_e^* \circ \opStyle{M}^* \circ \pi^* \circ (d\AtmToMapOpPlane)_{q}^*\Bigr)(\eta)
\end{equation}
where  
\begin{equation*}
    \eta :=  \bigl(\AtmToMapOpPlane \circ \pi \circ \opStyle{M}\bigr)\Bigl(\GroupAction\bigl(\gcurve^{\acurve}(1),\template\bigr)\Bigr) -\target
    \quad\text{and}\quad
q := \bigl(\pi \circ \opStyle{M}\bigr)\Bigl(\GroupAction\bigl(\gcurve^{\acurve}(1),\template\bigr)\Bigr)
\end{equation*}  
and the map
\[ \target \mapsto \Bigl(\AdjRep_{(\gcurve^{\acurve}(t))^{-1}}^* \circ\, d\bigl(\GroupAction(\cdot,\template)\circ L_{\gcurve^{\acurve}(t)}\bigr)_e^* \circ \opStyle{M}^* \circ \pi^* \circ \bigl(d(\AtmToMapOpPlane)_q\bigr)^* \Bigr)(\target)
\]
is computed by the following operators: 
\begin{enumerate}
\item $\bigl((d\AtmToMapOpPlane)_q \bigr)^* \colon \DataSpace \to \Real^{2 \times \numPep}$ that is defined as
\[
      \bigl(d(\AtmToMapOpPlane)_q\bigr)^*(\target) := \bigl(
      s_1(\target\ast \nabla g_{\sigma_1})(q_1), \dots,  s_{\numPep}(\target\ast \nabla g_{\sigma_{\numPep}})(q_{\numPep}) \bigr) = \projbbpos \in \Real^{2 \times \numPep},
\]      
\item $\pi^* \colon \Real^{2 \times \numPep} \to \AtmModSpace$ that is defined as
\[
      \pi^*(\projbbpos) := 
      \bigr( \Lambda \projbbpos_1, \Lambda \projbbpos_2,  \ldots, \Lambda \projbbpos_{\numPep}
      \bigr) = \bbpos \in \AtmModSpace
      \quad\text{where}\quad
      \Lambda =  \begin{pmatrix}
        1 & 0 \\
        0 & 1 \\
        0 & 0
    \end{pmatrix},
\]    
\item $\opStyle{M}^* \colon \AtmModSpace \to \ShapeSpace$ that is defined as
\[ 
\opStyle{M}^*(\bbpos) :=
  \biggl(
  \sum_{i=1}^\numPep \bbpos_i, \ldots,  \bbpos_{\numPep-1}+\bbpos_{\numPep},  \bbpos_{\numPep}\biggr) 
  = \relbbpos \in \ShapeSpace,
\]
\item $\Bigl(d\bigl( \GroupAction(\cdot,\template)\circ L_{\gcurve^{\acurve}(t)}\bigr)_e\Bigr)^* \colon \ShapeSpace \to \LieAlgebra$ that is defined as
\[ 
\Bigl(d\bigl(\GroupAction(\cdot,\template)\circ L_{\gcurve^{\acurve}(t)}\bigr)_e\Bigr)^*(\relbbpos) :=
  \biggl(
    (-\mathbb{I}_1^{-1} \circ \Pi)\Bigl(\template_1 \relbbpos_1^{\top} \bigl(\gcurve^{\acurve}(t)\bigr)_1\Bigr) 
    , \dots, 
    (-\mathbb{I}_{\numPep}^{-1} \circ \Pi)\Bigl(\template_{\numPep} \relbbpos_{\numPep}^{\top} \bigl(\gcurve^{\acurve}(t)\bigr)_{\numPep} \Bigr)
  \biggr)
\]  
where  $\Pi \colon \GLLieAlgebra(3,\Real)\to \SOLieAlgebra(3)$ denotes the orthogonal projection relative to the Frobenius inner product onto the Lie algebra $ \SOLieAlgebra(3)$. Explicitly, $\Pi(W)=(W-W^{\top})/2$ for any matrix $W\in \GLLieAlgebra(3,\Real)$.
\item $\AdjRep_{\gcurve^{\acurve}(t)}^* \colon \LieAlgebra \to \LieAlgebra$ that is defined as
\[ \AdjRep_{\gcurve^{\acurve}(t)}^*(\aelem) := \Bigl( (\mathbb{I}_1^{-1} \circ \AdjRep_{(\gcurve^{\acurve}(t))_1^{\top}}\mathbb{I}_1)(\aelem_1), \dots,(\mathbb{I}_{\numPep}^{-1} \circ \AdjRep_{(\gcurve^{\acurve}(t))_{\numPep}^{\top}} \circ \mathbb{I}_{\numPep})( \aelem_{\numPep)}
\Bigr)
\quad\text{for $\aelem \in \LieAlgebra$.}
\]
\end{enumerate}
\end{corollary}
\begin{proof}
The proof amounts to computing the expressions of the adjoint mappings, which we compute one by one.
We first calculate the adjoint of the map $\AdjRep_\rotmat \colon\bigl(\SOLieAlgebra(3)\bigr)^\numPep\to\bigl(\SOLieAlgebra(3)\bigr)^\numPep$ for any given $\rotmat \in (\SO(3))^\numPep$. 
    
Since the map $\AdjRep_\rotmat$ acts component wise, we assume for the moment in our calculations that $\numPep=1$. It holds that
\begin{equation}
    \inpro[]{\AdjRep_\rotmat(\aelemother)}{\aelem}=\inpro[F]{\AdjRep_\rotmat(\aelemother)}{\mathbb{I}(\aelem)}=-\traceop\bigl(\rotmat \aelemother\rotmat^{\top}\mathbb{I}(\aelem)\bigr)=-\traceop(\aelemother\rotmat^{\top}\mathbb{I}(\aelem)\rotmat)=\inpro[F]{\aelemother}{(\AdjRep_{\rotmat^{\top}} \circ\mathbb{I})(\aelem)},
\end{equation}
for an arbitrary $\aelem \in \SOLieAlgebra(3)$,
showing that
\begin{equation}
    \AdjRep_{\rotmat}^*(\aelem)=(\mathbb{I}^{-1} \circ \AdjRep_{\rotmat^{\top}} \circ \mathbb{I})(\aelem).
\end{equation}
In the general case $\numPep\geq1$, the formula is given by
\begin{equation}
    \AdjRep_{\rotmat}^*(\aelem)=
    \Bigl((\mathbb{I}_1^{-1} \circ \AdjRep_{\rho_1^{\top}} \circ \mathbb{I}_1)(\aelem_1), \dots, (\mathbb{I}_{\numPep}^{-1} \circ \AdjRep_{\rotmat_{\numPep}^{\top}} \circ \mathbb{I}_{\numPep})(\aelem_{\numPep})
    \Bigr)
    \quad\text{for arbitrary $\aelem \in \LieAlgebra$.}
\end{equation}
Next, we calculate the adjoint of the map $d(\GroupAction(\cdot,\relbbpos)\circ L_\rotmat)_e:\LieAlgebra\to \ShapeSpace$ for a fixed $\relbbpos\in \ShapeSpace$ and $\rotmat\in \LieGroup$. 
Just as  when we calculated the adjoint of the adjoint representation, we first restrict ourselves to the case $\numPep=1$. Differentiating the relation $(\GroupAction(\cdot,\relbbpos)\circ L_\rotmat)(T)=\rotmat T \relbbpos$ with respect to an arbitrary $T\in\SO(3)$ gives $d(\GroupAction(\cdot,\relbbpos)\circ L_\rotmat)_e \aelem=\rotmat \aelem \relbbpos$, where $\aelem$ is an element of the Lie algebra $\SOLieAlgebra(3)$. 
For  an arbitrary vector $b\in\Real^3$, we have that
\begin{multline}
        \inpro[]{d(\GroupAction(\cdot,\relbbpos)\circ L_\rotmat)_e (\aelem)}{b}
        =\inpro[]{\rotmat \aelem \relbbpos}{b}
        =b^{\top}\rotmat \aelem \relbbpos
        =\traceop(b^{\top}\rotmat \aelem \relbbpos)\\
        =\traceop(\relbbpos b^{\top} \rotmat \aelem)
        =\traceop(\Pi(\relbbpos b^{\top}\rotmat)\aelem)
        =-\inpro[F]{\Pi(\relbbpos b^{\top}\rotmat)}{\aelem},
\end{multline}
which shows that
\begin{equation}
    (d(\GroupAction(\cdot, \relbbpos)\circ L_\rotmat)_e)^*(b)=-(\mathbb{I}^{-1} \circ \Pi)(\relbbpos b^{\top}\rotmat),
\end{equation}
and in the general case for $\numPep\geq 1$ this becomes
\begin{align*}    \bigl(d(\GroupAction(\cdot,\relbbpos)\circ L_\rotmat)_e\bigr)^*(b)=
    \Bigl( (-\mathbb{I}_1^{-1} \circ \Pi)(\relbbpos_1b_1^{\top}\rotmat_1), \dots, (-\mathbb{I}_{\numPep}^{-1}\circ \Pi)(\relbbpos_\numPep b_\numPep^{\top}\rotmat_\numPep)
    \Bigr).
\end{align*}
 We turn our attention to the adjoints of the forward model. Firstly, the adjoint of the map $\opStyle{M}:\ShapeSpace\to \AtmModSpace$ is obtained from the relations
\begin{equation*}
        \inpro[]{\opStyle{M}(\relbbpos)}{\bbpos}
        =\sum_{i=1}^\numPep\inpro[]{\sum_{j=1}^i\relbbpos_j}{\bbpos_i}
        =\sum_{i=1}^\numPep\sum_{j=1}^i\inpro[]{\relbbpos_j}{\bbpos_i}
        =\sum_{j=1}^\numPep\sum_{i=j}^\numPep\inpro[]{\relbbpos_j}{\bbpos_i}
        =\sum_{j=1}^\numPep \inpro[]{\relbbpos_j}{\sum_{i=j}^\numPep \bbpos_i},
\end{equation*}
showing that
\begin{equation}
    \opStyle{M}^*(\bbpos)=\left(\sum_{i=1}^\numPep \bbpos_i,\dots,\bbpos_{\numPep-1}+\bbpos_{\numPep},\bbpos_{\numPep}\right).
\end{equation}
The adjoint of the projection map $(x,y,z)\mapsto (x,y)$ is given by the inclusion $(x,y)\mapsto(x,y,0)$. Thus, the  adjoint of the map $\pi:\AtmModSpace \to\Real^{2\times \numPep}$, which is given by a projection on each component, is obtained by including each of the components of $\Real^{2\times \numPep}$ into $\Real^3$. The adjoint of the projection map $(x,y,z)\mapsto (x,y)$ is given by the matrix-vector product 
\[
    \Lambda \begin{pmatrix}
         x\\
         y 
    \end{pmatrix} = \begin{pmatrix}
         x\\ 
         y \\
         0
    \end{pmatrix} 
\quad\text{where}\quad
   \Lambda :=  \begin{pmatrix}
        1 & 0 \\
        0 & 1 \\
        0 & 0
    \end{pmatrix}.
\]
The extension to $\Real^{2 \times \numPep}$ is the linear map $\Lambda \oplus \ldots \oplus \Lambda \colon \Real^{2\times \numPep} \to \AtmModSpace$ obtained by applying $\Lambda$ to each component.

Finally, we calculate the adjoint of $d(\AtmToMapOpPlane)_q \colon \Real^{2\times \numPep}\to \DataSpace$. Differentiating $\AtmToMapOpPlane$ at an arbitrary point $q\in \Real^{2\times \numPep}$ gives 
\[ d(\AtmToMapOpPlane)_q(q')=\sum_{i=1}^\numPep\langle -q'_i,s_i\tau_{q_i}\nabla \gauss_{\sigma_i}\rangle = \sum_{i=1}^\numPep\langle q'_i,-s_i\tau_{q_i}\nabla \gauss_{\sigma_i}\rangle
\quad\text{for $q'\in\Real^{2\times \numPep}$.}
\]
Thus, if $\data \in \DataSpace$, we have
\begin{align*}
    \bigl\langle d(\AtmToMapOpPlane)_q(q'),\data \bigr\rangle_{\DataSpace}=\sum_{i=1}^\numPep \left\langle q'_i, -s_i\int_{\Real^2} \data \tau_{q_i}\nabla \gauss_{\sigma_i} \right\rangle=\sum_{i=1}^\numPep \bigl\langle
    q'_i,s_i(\data \ast \nabla \gauss_{\sigma_i})(q_i)\bigr\rangle.
\end{align*}
The second equality above follows from the following identity;
\[
-\int_{\Real^2} \data \tau_{q_i}\nabla \gauss_{\sigma_i} =(\data \ast \nabla \gauss_{\sigma_i})(q_i)
\]
which holds since the gradient of an even function (in our case: the Gaussian probability density function) is odd.
Furthermore, the integral above acts component wise on the integrand, and $(\data\ast \nabla \gauss_{\sigma_i})(q_i)$ is the convolution of $\data$ with $\nabla \gauss_{\sigma_i}$ evaluated at the point $q_i$.
\end{proof}

\section{An existence result}
This section formulates the minimization of \cref{eq:general_energy} over $\LpSpace^2$ curves in the Lie algebra $\LieAlgebra$ following \cite{Younes2010}.
The existence of a minimizer can be proven in the same way as in the proof of \cite[Theorem~7.4]{Chen:2018aa}.
Most of the results stated here are special cases of more general results found in \cite{Younes2010}, but the proofs in our specific setting are simpler.

Let $\MatrixGroup$ be the space of complex $n\times n$-matrices equipped with the operator norm $\|\cdot\|$.
Note that $\MatrixGroup$ is finite dimensional, so the operator norm topology is the same as the one induced by an inner product, for example.
The operator norm makes the proofs in this appendix easier, which is why we have chosen this norm.
The Banach space $\LpSpace^2([0,1],\MatrixGroup)$ is defined as the space of all measurable curves $\acurve \colon [0,1]\to \MatrixGroup$ such that 
\[ \int_0^1 \bigl\|\acurve(s) \bigr\|_{\MatrixGroup}^2 \dint s<\infty. \]

Given a curve $\acurve\in \LpSpace^2([0,1],\MatrixGroup)$, we say that a continuous curve $\gcurve \colon [0,1]\to \MatrixGroup$ is a solution to the differential equation $\dot\gcurve=\acurve \gcurve$ with initial condition $\gcurve(0)=e$ (special case of \cref{eq:flow2} when $\LieGroup\subset \MatrixGroup$), in which $e$ is the identity matrix, if 
\[ \gcurve(t)=e+\int_0^t \acurve(s)\gcurve(s) \dint s. \]
The existence and uniqueness of solutions to the differential equation is guaranteed by \cite[Corollary~C.7]{Younes2010}.
We shall often write $\gcurve^{\acurve}$ for this unique solution to emphasize the dependence on $\acurve$.

We now show that the map $\acurve \mapsto \gcurve^{\acurve}(t)$ is weakly continuous for any $t\in [0,1]$, which in principle follows from \cite[Section~7.2.4]{Younes2010}.
\begin{proposition}\label{prp:SolWeakCont}
  Let $\MatrixGroup$ be the space of complex $n\times n$-matrices.
  For any fixed $t\in[0,1]$, the map $\LpSpace^2([0,1],\MatrixGroup) \ni \acurve \mapsto \gcurve^{\acurve}(t) \in  \MatrixGroup$, is weakly continuous on the closed unit ball of $\LpSpace^2([0,1],\MatrixGroup)$.
\end{proposition}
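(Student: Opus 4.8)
The plan is to combine a uniform a priori bound, an Arzelà–Ascoli compactness argument, and a weak–strong product limit, all wrapped in a subsequence argument. Since $\LpSpace^2([0,1],\MatrixGroup)$ is a separable Hilbert space, its weak topology is metrizable on the bounded closed unit ball, so it suffices to prove \emph{sequential} weak continuity there: given a sequence $\acurve_k \rightharpoonup \acurve$ with $\|\acurve_k\|_{\LpSpace^2}\le 1$, I would show $\gcurve^{\acurve_k}(t)\to\gcurve^{\acurve}(t)$ in $\MatrixGroup$. Because $\MatrixGroup$ is finite-dimensional, strong and weak convergence of the image coincide, so establishing norm convergence in $\MatrixGroup$ is exactly what is required.

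First I would establish a uniform bound valid over the whole unit ball. From the integral equation $\gcurve^{\acurve}(t)=e+\int_0^t\acurve(s)\gcurve^{\acurve}(s)\dint s$ and submultiplicativity of the operator norm one gets $\|\gcurve^{\acurve}(t)\|_{\MatrixGroup}\le 1+\int_0^t\|\acurve(s)\|_{\MatrixGroup}\|\gcurve^{\acurve}(s)\|_{\MatrixGroup}\dint s$, so Grönwall's inequality together with Cauchy–Schwarz ($\|\acurve\|_{\LpSpace^1}\le\|\acurve\|_{\LpSpace^2}\le 1$) gives $\|\gcurve^{\acurve}(t)\|_{\MatrixGroup}\le \exp(\|\acurve\|_{\LpSpace^1})\le e$, uniformly in $t\in[0,1]$ and in $\acurve$ over the unit ball. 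Next, using this bound and Cauchy–Schwarz again, $\|\gcurve^{\acurve_k}(t)-\gcurve^{\acurve_k}(s)\|_{\MatrixGroup}\le e\int_s^t\|\acurve_k(\tau)\|_{\MatrixGroup}\dint\tau\le e\,|t-s|^{1/2}$, so the family $\{\gcurve^{\acurve_k}\}$ is bounded and equicontinuous in $\ContSpace([0,1],\MatrixGroup)$. By Arzelà–Ascoli, some subsequence (not relabelled) converges uniformly to a limit $\gcurve^{*}\in\ContSpace([0,1],\MatrixGroup)$.

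The crux is to identify $\gcurve^{*}$ as $\gcurve^{\acurve}$ by passing to the limit in $\gcurve^{\acurve_k}(t)=e+\int_0^t\acurve_k(s)\gcurve^{\acurve_k}(s)\dint s$. I would split the integral as $\int_0^t\acurve_k\gcurve^{\acurve_k}=\int_0^t\acurve_k\gcurve^{*}+\int_0^t\acurve_k(\gcurve^{\acurve_k}-\gcurve^{*})$. The second term is bounded by $\|\acurve_k\|_{\LpSpace^2}\|\gcurve^{\acurve_k}-\gcurve^{*}\|_{\LpSpace^2([0,t])}\le\|\gcurve^{\acurve_k}-\gcurve^{*}\|_{\ContSpace}\to 0$ by uniform convergence, whereas the first term converges to $\int_0^t\acurve\,\gcurve^{*}$: entrywise, each component of $\acurve\mapsto\int_0^t\acurve(s)\gcurve^{*}(s)\dint s$ is a bounded linear functional on $\LpSpace^2([0,1],\MatrixGroup)$, obtained by pairing $\acurve$ against a kernel built from $\mathbf{1}_{[0,t]}\gcurve^{*}$, which is bounded and hence in $\LpSpace^2$, so the weak convergence $\acurve_k\rightharpoonup\acurve$ applies directly. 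Passing to the limit yields $\gcurve^{*}(t)=e+\int_0^t\acurve(s)\gcurve^{*}(s)\dint s$ for every $t$, whence $\gcurve^{*}=\gcurve^{\acurve}$ by the uniqueness statement in \cite[Corollary~C.7]{Younes2010}.

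Finally, since every subsequence of $\{\gcurve^{\acurve_k}\}$ admits a further subsequence converging uniformly to the \emph{same} limit $\gcurve^{\acurve}$, the whole sequence converges, giving $\gcurve^{\acurve_k}(t)\to\gcurve^{\acurve}(t)$ for each fixed $t\in[0,1]$, which is the claimed weak continuity. I expect the main obstacle to be precisely the weak–strong product limit in the nonlinear term $\acurve_k\gcurve^{\acurve_k}$: weak convergence of $\acurve_k$ alone cannot be pushed through the product, and it is the Arzelà–Ascoli step that upgrades $\gcurve^{\acurve_k}$ to \emph{strong} (uniform) convergence, so that the product of a weakly convergent factor and a strongly convergent factor can be controlled.
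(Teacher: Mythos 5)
Your proof is correct, and it takes a genuinely different route from the paper's. The paper argues directly and quantitatively: fixing $\acurve_1$ and subtracting the two integral equations, it splits the difference so that the dependence on $\acurve_2$ enters only through the \emph{fixed} bounded linear map $\Lambda_t(\acurve) = \int_0^t \acurve(s)\gcurve^{\acurve_1}(s)\dint s$; Grönwall then absorbs the remaining term $\int_0^t\|\acurve_2(s)(\gcurve_1(s)-\gcurve_2(s))\|\dint s$ and yields $\|\gcurve^{\acurve_2}(t)-\gcurve^{\acurve_1}(t)\| \le e^{\sqrt{t}}\,\|\Lambda_t(\acurve_1-\acurve_2)\|$, after which weak continuity is immediate because a bounded linear map into a finite-dimensional space is weak-to-norm continuous — no sequences, no metrizability, no compactness. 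You instead run the standard weak-compactness template: metrizability of the weak topology on the ball to reduce to sequences, a uniform Grönwall bound and equicontinuity, Arzelà–Ascoli to extract a uniformly convergent subsequence, a weak–strong splitting of the product $\acurve_k\gcurve^{\acurve_k}$ to pass to the limit in the integral equation, uniqueness to identify the limit, and the subsequence-of-subsequence trick to upgrade to convergence of the full sequence. The two proofs tame the same nonlinearity in dual ways: where you use Arzelà–Ascoli to make $\gcurve^{\acurve_k}$ converge \emph{strongly} so the product with the weakly convergent $\acurve_k$ can be controlled, the paper lets Grönwall swallow the problematic cross term outright. What each buys: the paper's argument is shorter, works with weak neighborhoods directly (so it needs neither separability nor metrizability of the ball), and provides an explicit modulus of continuity via $\Lambda_t$; yours is longer but more robust — it would survive in settings where no clean linear-plus-Grönwall decomposition of the solution map is available — and it delivers slightly more en route, namely uniform-in-$t$ convergence of the whole solution curves rather than convergence at a single fixed $t$. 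All your individual steps check out (the bound $\|\gcurve^{\acurve}(t)\|\le e$, the Hölder equicontinuity $e\,|t-s|^{1/2}$, the representation of the entries of $\acurve \mapsto \int_0^t \acurve(s)\gcurve^{*}(s)\dint s$ as $\LpSpace^2$-pairings against the bounded kernel $\mathbf{1}_{[0,t]}\gcurve^{*}$, and the appeal to uniqueness), so there is no gap.
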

\begin{proof}
    Take $\acurve_1,\acurve_2\in \LpSpace^2([0,1],\MatrixGroup)$ with $\|\acurve_2\|\leq 1$.
    Write $\gcurve_i=\gcurve^{\acurve_i}$ for ease of notation.
    By definition, it holds that
    \begin{align*}
        \gcurve_i(t)=e+\int_0^t \acurve_i(s)\gcurve_i(s)ds.
    \end{align*}
    Then
    \begin{multline*}
    \|\gcurve_2(t)-\gcurve_1(t)\|
    \leq\left\|\int_0^t(\acurve_1(s)-\acurve_2(s))\gcurve_1(s)ds\right\|+\int_0^t\|\acurve_2(s)(\gcurve_1(s)-\gcurve_2(s))\|ds\\
    =\|\Lambda_t(\acurve_1-\acurve_2)\|+\int_0^t\|\acurve_2(s)(\gcurve_1(s)-\gcurve_2(s))\|ds.
    \end{multline*}
    where $\Lambda_t(\acurve)=\int_0^t\acurve(s)\gcurve_1(s)ds$ defines a bounded linear map $\Lambda_t:\LpSpace^2([0,1],\MatrixGroup)\to \MatrixGroup$.
    From Grönwall's lemma and the assumption $\|\acurve_2\|\leq1$, it follows that
    \begin{align*}
        \|\gcurve_2(t)-\gcurve_1(t)\|\leq \|\Lambda_t(\acurve_1-\acurve_2)\|e^{\int_0^t\|\acurve_2(s)\|ds}\leq\|\Lambda_t(\acurve_1-\acurve_2)\|e^{\sqrt{t}}.
    \end{align*}
    
    Since the map $\Lambda_t$ is bounded and has finite dimensional codomain, it is weakly continuous (note that boundedness implies continuity with respect to the weak topology on both domain and codomain, and since the codomain is finite dimensional the map is also continuous with respect to the weak topology on the domain and the norm topology on the codomain).
    Thus, for any $\varepsilon>0$ there is a weakly open neighborhood $U$ of $\acurve_1$ such that  $\|\gcurve_1(t)-\gcurve_2(t)\|<\varepsilon$ if $\acurve_2\in U$.
    It follows that the restriction of $\acurve\mapsto \gcurve^{\acurve}(t)$ to the closed unit ball is weakly continuous.
\end{proof}

Our next results show that if $\acurve$ lies in a Lie algebra of a matrix Lie group that is closed in $\MatrixGroup$ (such as $\SO(3)$), then the corresponding solution $\gcurve^{\acurve}$ lies in the group.
\begin{proposition}
    Suppose $\PoseGroup\subset \MatrixGroup$ is a closed matrix Lie group. If $\acurve\in \LpSpace^2([0,1],\LieAlgebra)$, then the unique continuous solution $\gcurve^{\acurve}:[0,1]\to \MatrixGroup$ to $\dot\gcurve=\acurve\gcurve$ with initial condition $\gcurve(0)=e$ is a curve in $\PoseGroup$.
\end{proposition}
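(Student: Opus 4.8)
The plan is to establish the claim first for piecewise constant controls, where it is transparent, and then to transfer it to a general $\acurve\in\LpSpace^2([0,1],\LieAlgebra)$ by approximation, using the closedness of $\PoseGroup$ to pass to the limit. Fixing $t\in[0,1]$, I would begin by approximating: since $\LieAlgebra$ is a finite-dimensional linear subspace of $\MatrixGroup$, the piecewise constant $\LieAlgebra$-valued curves are dense in $\LpSpace^2([0,1],\LieAlgebra)$, so there is a sequence $\acurve_k\to\acurve$ in $\LpSpace^2$ with each $\acurve_k$ piecewise constant and taking values in $\LieAlgebra$.

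For such an $\acurve_k$ the solution is explicit. On any subinterval where $\acurve_k\equiv u_j\in\LieAlgebra$, the flow equation $\dot\gcurve=u_j\gcurve$ is linear with constant coefficients, so $\gcurve^{\acurve_k}$ advances by left multiplication with $\exp\!\big((t-t_{j-1})u_j\big)$. Hence $\gcurve^{\acurve_k}(t)$ is a finite product of matrix exponentials of elements of $\LieAlgebra$. Because $\PoseGroup$ is a closed matrix Lie group, we have $\exp(\LieAlgebra)\subseteq\PoseGroup$, so each factor lies in $\PoseGroup$ and therefore so does the product: $\gcurve^{\acurve_k}(t)\in\PoseGroup$ for every $k$.

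It then remains to pass to the limit. The sequence $\{\acurve_k\}$ is $\LpSpace^2$-bounded, so it lies in a ball of some radius $R$, and $\acurve_k\to\acurve$ strongly, hence weakly. By \cref{prp:SolWeakCont} the solution map $\acurve\mapsto\gcurve^{\acurve}(t)$ is weakly continuous on balls (the unit-ball restriction there is harmless: rerunning the Grönwall estimate in its proof only replaces the factor $e^{\sqrt t}$ by $e^{R\sqrt t}$), and since $\MatrixGroup$ is finite-dimensional its weak and norm topologies coincide. Consequently $\gcurve^{\acurve_k}(t)\to\gcurve^{\acurve}(t)$ in $\MatrixGroup$. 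As $\PoseGroup$ is closed and each $\gcurve^{\acurve_k}(t)\in\PoseGroup$, the limit $\gcurve^{\acurve}(t)$ lies in $\PoseGroup$; since $t$ was arbitrary, the entire curve does.

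The main obstacle is the low regularity of $\acurve$: the naive argument ``$\dot\gcurve\,\gcurve^{-1}=\acurve\in\LieAlgebra$ almost everywhere, hence $\gcurve\in\PoseGroup$'' cannot be made pointwise, and this is exactly what the approximation circumvents. Two points must be handled with care, namely that the approximants can be chosen $\LieAlgebra$-valued (ensured by linearity of $\LieAlgebra$) and that the solution map is stable uniformly over the bounded sequence $\{\acurve_k\}$ (supplied by the Grönwall estimate behind \cref{prp:SolWeakCont}); the only Lie-theoretic input needed is the standard fact $\exp(\LieAlgebra)\subseteq\PoseGroup$ for a closed matrix Lie group.
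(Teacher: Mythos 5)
Your proof is correct, and its core step takes a genuinely different route from the paper's. Both arguments share the same architecture—prove the claim on a dense class of controls, then transfer it to general $\acurve\in\LpSpace^2([0,1],\LieAlgebra)$ via the solution map's continuity from \cref{prp:SolWeakCont} together with closedness of $\PoseGroup$—but the dense classes and the key lemmas differ. The paper approximates by \emph{smooth} curves and argues geometrically: for smooth $\acurve$, the solution is an integral curve of the time-dependent vector field $W(t,g)=\acurve(t)g$, which is tangent to the submanifold $\PoseGroup$, so uniqueness of integral curves keeps the flow in $\PoseGroup$ locally, and an open-closed argument on $[0,1]$ globalizes this. You approximate by \emph{piecewise constant} curves and argue algebraically: the solution is then an explicit finite product of factors $\exp\bigl((t-t_{j-1})u_j\bigr)$, each lying in $\PoseGroup$ by the standard fact $\exp(\LieAlgebra)\subseteq\PoseGroup$ for a closed matrix Lie group (Cartan's closed subgroup theorem identifies $T_e\PoseGroup$ with $\{X:\exp(tX)\in\PoseGroup\ \forall t\}$). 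Your route is more elementary in this matrix setting—it avoids the paper's smooth extension of $W$ to $\MatrixGroup$, the induction showing $\gcurve^{\acurve}$ is smooth, and the connectedness argument—at the cost of using a Lie-theoretic input that would not survive generalization beyond matrix groups, where the paper's flow-invariance argument still makes sense. Your handling of the limit is also careful on the one point that needs it: \cref{prp:SolWeakCont} is stated on the closed unit ball, and you correctly observe that rerunning the Grönwall estimate on a ball of radius $R$ merely replaces $e^{\sqrt{t}}$ by $e^{R\sqrt{t}}$ (indeed, since your $\acurve_k\to\acurve$ strongly, the norm estimate $\|\gcurve^{\acurve_k}(t)-\gcurve^{\acurve}(t)\|\leq C\|\acurve_k-\acurve\|_2\,e^{R\sqrt{t}}$ already suffices, so weak continuity is slightly more than you need—the same remark applies to the paper's own limit passage).
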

\begin{proof}
    In the case that $\acurve$ is smooth, we have that there exists for each $p\in \PoseGroup$ a unique integral curve $\gcurve \colon J\to \PoseGroup$ for some open interval $J$ containing 0, of the smooth time-dependent vector field $W(t,g)=\acurve(t)g$ on $\PoseGroup$ starting at $p$.
    Note that $W$ extends to a smooth vector field on $\MatrixGroup$, so any integral curve of $W$ in $\MatrixGroup$ that starts at a point of $\PoseGroup$ stays in $\PoseGroup$, at least for small enough times, by uniqueness of integral curves.

    By the definition of $\gcurve^{\acurve}$ and an induction argument, $\gcurve^{\acurve}$ is smooth and thus is a smooth integral curve of $W$.
    
    By the above, the set $A$ of points in $[0,1]$ that $\gcurve^{\acurve}$ maps into $\PoseGroup$ is open.
    Since $A=(\gcurve^{\acurve})^{-1}(\PoseGroup)$ the set is closed by continuity.
    Since $0\in A$, we conclude that $A=[0,1]$.

    In general, if $\acurve\in \LpSpace^2([0,1],\LieAlgebra)$, there exists a sequence of smooth curves $(\acurve_k)$ that converges to $\acurve$ in $\LpSpace^2([0,1],\LieAlgebra)$.
    Since the map $\acurve\mapsto\gcurve^{\acurve}(t)$ for any fixed $t\in[0,1]$ is weakly continuous, as we saw in the previous proposition, it follows that every point $\gcurve^{\acurve}(t)$ lies arbitrarily close to $\PoseGroup$.
    As $\PoseGroup$ is closed, we must have $\gcurve^{\acurve}(t)\in \PoseGroup$.
\end{proof}
The above proposition shows that $\acurve\mapsto\gcurve^{\acurve}(1)$ maps $\LpSpace^2([0,1],\LieAlgebra)$ into $\PoseGroup$.
This result will be used implicitly in what follows.

Let $\ShapeSpace$ and $\DataSpace$ be topological spaces, and assume that the group $\PoseGroup$ acts continuously on $\ShapeSpace$ via $\GroupAction\colon\LieGroup\times\ShapeSpace\to\ShapeSpace$.
For a map $\opStyle{D} \colon \ShapeSpace \times \DataSpace \to[0,\infty)$, a constant $\lambda>0$, and fixed elements $\template\in \ShapeSpace$ and $\target\in \DataSpace$ we shall consider the functional
\begin{align*}
    \EnergyFunc(\acurve)=\opStyle{D}(\GroupAction(\gcurve^{\acurve}(1),\template) ,\target)+\lambda\|\acurve\|_2^2,
\end{align*}
which maps $\LpSpace^2([0,1],\LieAlgebra)$ into $[0,\infty)$.
We can now prove the existence of a minimizer of \cref{eq:general_energy}, using the same method as in the proof of Theorem~7.4 in \cite{Chen:2018aa}.
\begin{theorem}
    If $\opStyle{D}(\,\cdot\,,\target)$ is lower semi-continuous on $\ShapeSpace$, then there exists an element $\tilde\acurve\in \LpSpace^2([0,1],\LieAlgebra)$ such that 
    \[ \EnergyFunc(\tilde\acurve)=\inf_{\acurve\in \LpSpace^2([0,1],\LieAlgebra)}\EnergyFunc(\acurve).
    \]
\end{theorem}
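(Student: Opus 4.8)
The plan is to use the direct method of the calculus of variations. First I would pick a minimizing sequence $(\acurve_k) \subset \LpSpace^2([0,1],\LieAlgebra)$, i.e., one with $\EnergyFunc(\acurve_k) \to \inf_{\acurve} \EnergyFunc(\acurve)$. Since $\opStyle{D}$ takes values in $[0,\infty)$, the regularization term alone controls the energy: $\lambda \|\acurve_k\|_2^2 \leq \EnergyFunc(\acurve_k)$, and boundedness of the right-hand side shows that $(\acurve_k)$ is bounded in $\LpSpace^2$, say contained in the closed ball of some radius $R$. Because $\LpSpace^2([0,1],\LieAlgebra)$ is a Hilbert space, hence reflexive, bounded sequences are weakly sequentially precompact, so after passing to a subsequence (not relabeled) we may assume $\acurve_k \rightharpoonup \tilde\acurve$ weakly. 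As a closed ball is convex and norm-closed, it is weakly closed, so $\tilde\acurve$ again lies in the ball of radius $R$.

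The next step is to establish weak lower semicontinuity of $\EnergyFunc$ at $\tilde\acurve$. The regularization term is handled by the standard fact that the norm on a Hilbert space is weakly lower semicontinuous, giving $\|\tilde\acurve\|_2^2 \leq \liminf_k \|\acurve_k\|_2^2$. For the data term I would invoke the weak continuity of $\acurve \mapsto \gcurve^{\acurve}(1)$ from \cref{prp:SolWeakCont}. That proposition is stated on the closed unit ball, but the identical Grönwall estimate applies verbatim on the closed ball of radius $R$ — the only change is that the exponential factor becomes $e^{\sqrt{t}\,R}$, still a finite constant — so $\gcurve^{\acurve_k}(1) \to \gcurve^{\tilde\acurve}(1)$ in $\MatrixGroup$ in the norm topology. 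Composing with the map $\GroupAction(\Cdot,\template)$, which is continuous since $\GroupAction$ is continuous and $\template$ is fixed, yields $\GroupAction(\gcurve^{\acurve_k}(1),\template) \to \GroupAction(\gcurve^{\tilde\acurve}(1),\template)$ in $\ShapeSpace$.

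Now I apply the hypothesis that $\opStyle{D}(\Cdot,\target)$ is lower semicontinuous on $\ShapeSpace$: since the arguments converge in $\ShapeSpace$,
\[
\opStyle{D}\bigl(\GroupAction(\gcurve^{\tilde\acurve}(1),\template),\target\bigr)
\leq \liminf_{k} \opStyle{D}\bigl(\GroupAction(\gcurve^{\acurve_k}(1),\template),\target\bigr).
\]
Adding the two lower-semicontinuity inequalities gives $\EnergyFunc(\tilde\acurve) \leq \liminf_k \EnergyFunc(\acurve_k) = \inf_{\acurve}\EnergyFunc(\acurve)$, and since $\tilde\acurve \in \LpSpace^2([0,1],\LieAlgebra)$ this forces equality, so $\tilde\acurve$ is the desired minimizer.

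The main obstacle is the data-term step: one must ensure that the mere weak convergence $\acurve_k \rightharpoonup \tilde\acurve$ suffices to pass to the limit inside the strongly nonlinear map $\acurve \mapsto \opStyle{D}(\GroupAction(\gcurve^{\acurve}(1),\template),\target)$. This is exactly where \cref{prp:SolWeakCont} is essential, and the delicate point is the upgrade from weak convergence of the controls $\acurve_k$ to \emph{norm} convergence of the endpoints $\gcurve^{\acurve_k}(1)$; once that is secured, the finite dimensionality of $\MatrixGroup$ and the continuity of $\GroupAction$ make the remaining composition routine, and only lower semicontinuity (rather than full continuity) of $\opStyle{D}$ is then needed. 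A minor bookkeeping item is verifying the radius-$R$ generalization of \cref{prp:SolWeakCont}, which follows from the same argument.
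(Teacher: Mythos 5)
Your proposal is correct and follows essentially the same route as the paper: a minimizing sequence bounded by the regularization term, extraction of a weakly convergent subsequence (the paper invokes Banach--Alaoglu plus separability rather than reflexivity, which is equivalent here), norm convergence of the endpoints $\gcurve^{\acurve_k}(1)$ via \cref{prp:SolWeakCont}, and then lower semicontinuity of $\opStyle{D}(\Cdot,\target)$ together with weak lower semicontinuity of the norm. In fact you are slightly more careful than the paper, which applies \cref{prp:SolWeakCont} directly even though the minimizing sequence need only lie in a ball of radius $R$ rather than the unit ball; your observation that the Gr\"onwall estimate goes through verbatim with the factor $e^{\sqrt{t}\,R}$ closes that small gap.
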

\begin{proof}
    Let $\alpha=\inf_{\acurve\in \LpSpace^2([0,1],\LieAlgebra)}\EnergyFunc(\acurve)$ and let $(\acurve_k)$ be a sequence such that $\EnergyFunc(\acurve_k)\to\alpha$.
    Since $\lambda>0$, the sequence $(\acurve_k)$ is bounded. 
    Furthermore, since $\LieAlgebra$ is finite dimensional, $\LpSpace^2([0,1],\LieAlgebra)$ is separable, and it follows from the Banach--Alaoglu theorem that there is a subsequence such that $(\acurve_k)$ converges weakly to some $\tilde\acurve\in\LpSpace^2([0,1])$.
    By \ref{prp:SolWeakCont}, we have that $\gcurve^{\acurve_k}(1)\to\gcurve^{\tilde\acurve}(1)$.
    Now, the identity $\EnergyFunc(\tilde\acurve)=\alpha$ follows directly from the following calculation:
    \begin{multline*}
        \alpha = \lim_{k \to \infty} \EnergyFunc(\acurve_k)
        \geq\liminf_{k \to \infty} \opStyle{D}(\GroupAction(\gcurve^{\acurve_k}(1),\template) ,\target)+\lambda\liminf_{k \to \infty} \|\acurve_k\|^2_{2}
        \\
        \geq\opStyle{D}(\GroupAction(\gcurve^{\tilde\acurve}(1),\template) ,\target)+\lambda\liminf_{k \to \infty}\|\acurve_k\|^2_{2}
        \geq\opStyle{D}(\GroupAction(\gcurve^{\tilde\acurve}(1),\template) ,\target)+\lambda\|\tilde\acurve\|^2_{2}
        =\EnergyFunc(\tilde\acurve)\geq \alpha.
    \end{multline*}
\end{proof}

 \bibliographystyle{siamplain}
 \bibliography{refs}

\end{document}